\def\ZR{{\mathbb R}}
\def\ZN{{\mathbb N}}
\def\beq{\begin{equation}}
\def\eeq{\end{equation}}
\def\be{\begin{equation}}
\def\ee{\end{equation}}
\def\beqar{\begin{eqnarray}}
\def\eeqar{\end{eqnarray}}
\def\ber{\begin{eqnarray}}
\def\eer{\end{eqnarray}}
\def\berb{\begin{eqnarray*}}
	\def\eerb{\end{eqnarray*}}
\def\Ker#1{\mathop{\rm Ker}\nolimits#1}
\def\const{{\rm const}}
\def\norm#1.#2.{\|#1\|_{#2}}
\def\Norm#1.#2.{\big\|#1\big\|_{#2}}
\def\NOrm#1.#2.{\bigg\|#1\bigg\|_{#2}}
\def\NORm#1.#2.{\Big\|#1\Big\|_{#2}}
\def\NORM#1.#2.{\Bigg\|#1\Bigg\|_{#2}}
\def\vec#1{{\mathchoice{\mbox{\boldmath$\displaystyle#1$}}
		{\mbox{\boldmath$\textstyle#1$}}
		{\mbox{\boldmath$\scriptstyle#1$}}
		{\mbox{\boldmath$\scriptscriptstyle#1$}}}}
\def \0b{{\hbox{\boldmath $0$}}}
 \def \b{\vec{b}}
\newcommand{\eb}{\vec{e}} \newcommand{\fb}{\vec{f}}
\newcommand{\mb}{\vec{m}} \newcommand{\nb}{\vec{n}}
 \newcommand{\tb}{\vec{t}}
\newcommand{\ub}{\vec{u}} \newcommand{\vb}{\vec{v}}
 \newcommand{\xb}{\vec{x}}
\newcommand{\yb}{\vec{y}} 
\newcommand{\Abb}{{\bf A}} \newcommand{\Bbbb}{{\bf B}}
 \newcommand{\Hbb}{{\bf H}}
\newcommand{\Ibb}{{\bf I}} 
\newcommand{\Mbb}{{\bf M}} \newcommand{\Nbb}{{\bf N}}
\newcommand{\Qbb}{{\bf Q}}
\newcommand{\Mb}{\vec{M}} \newcommand{\Nb}{\vec{N}}
\newcommand{\Qb}{\vec{Q}} 
\newcommand{\Ub}{\vec{U}} \newcommand{\Vb}{\vec{V}}
\newcommand{\Yb}{\vec{Y}} 
\def \alb{\vec{\alpha}} \def \betab{\vec{\beta}}
\def \thb{\vec{\theta}} 
\def \lab{\vec{\lambda}}
 \def \phib{\vec{\phi}}
 \def \chib{\vec{\chi}}
\def \Thetab{\vec{\Theta}} \def \Lab{\vec{\Lambda}}
\def \Phib{\vec{\Phi}} 
\def \Omegab{\vec{\Omega}}
\newcommand{\cI}{{\cal I}} 
 \newcommand{\cN}{{\cal N}}
\newcommand{\cS}{{\cal S}} 
 \newcommand{\cV}{{\cal V}}
\def \ntb{\vec{\tilde{n}}}
\def \utb{\vec{\tilde{u}}}
\def \ytb{\vec{\tilde{y}}}
\def \yhb{\vec{\hat{y}}}
\newtheorem{remark}{Remark}[section]
\newcounter{primjer}[section]
\newcounter{tvrdnja}[section]
\newcounter{zadatak}[section]
\newcommand{\Pbbb}{{\mathbb P}}
\newcommand{\Phitb}{{\vec{\tilde{\Phi}}}}
\newcommand{\Mbbbb}{\Mbb^{6\times6}}
\newcommand{\sspan}{\mathop{\rm span}}
\renewcommand{\Im}{\mathop{\rm Im}}
\DeclareMathOperator{\Ran}{Im}
\newtheorem{theorem}{Theorem}[section]
\newtheorem{lemma}[theorem]{Lemma}
\newtheorem{proposition}[theorem]{Proposition}
\newtheorem{definition}{Definition}[section]
\newcommand{\thetab}{\boldsymbol{\theta}}
\newcommand{\Vspace}{H^1(0,\ell;\ZR^3)\times H^1(0,\ell;\ZR^3)}
\newcommand{\thetatb}{\vec{\tilde\theta}}
\def \Omegab{\vec{\Omega}}
\def \thhhb{\vec{\hat{\theta}}}
\newcommand{\Vstent}{V_{S0}}
\title{Mixed formulation of the one-dimensional equilibrium model for
elastic stents}
\author{Luka Grubi\v{s}i\'{c}\footnote{Department of Mathematics, University of Zagreb, Croatia, luka.grubisic@math.hr}, Josip Ivekovi\'{c}\footnote{jivekovi@gmail.com}, Josip Tamba\v{c}a\footnote{Department of Mathematics, University of Zagreb, Croatia, tambaca@math.hr} \ and Bojan \v{Z}ugec\footnote{Faculty of Organization and Informatics, University of Zagreb, Vara\v{z}din, Croatia, bojan.zugec@foi.hr}}
\date{\today}
\begin{document}

\maketitle

\begin{abstract}
In this paper we formulate and analyze the mixed formulation of the one-dimensional equilibrium model of elastic stents. The model is based on the curved rod model for the inextensible and ushearable struts and is formulated in the weak form in \cite{IMASJ}. It is given by a system of ordinary differential equations at the graph structure. In order to numerically treat the model using finite element method the mixed formulation is plead for. We obtain equivalence of the weak and the mixed formulation by proving the Babuska--Brezzi condition for the stent structure.
\end{abstract}

\tableofcontents

\section{Introduction}

A stent is a mesh tube that is inserted into a natural conduit of the body to prevent or counteract a disease-induced localized flow constriction. For instance, to keep the arteries open, a stent is inserted at the location of the narrowing. Performance of coronary stents depends on the mechanical properties of the material the stent is made of and on the geometrical properties of a stent, e.g., number of stent struts, their length, their placement, the strut width and thickness,  geometry of the cross section of each stent strut, etc. As a consequence the behavior of the stents is very complex and reliable models are desirable.

Since the stents are usually made of metals we consider a stent to be a three-dimensional elastic body
defined as a union of three-dimensional struts.
If the deformations in the problem are small behavior of stents can be modeled by the linearized elasticity.
The equations of linearized elasticity in thin domains are very demanding for numerical approximation and qualitative analysis. Therefore, it is appealing to construct a more simple analytical approximation. The one-dimensional model of stents was first considered in \cite{SIAMstent} and then reformulated in \cite{IMASJ} as a system of ordinary differential equations given on the graph defined by the middle curves of the stent struts. The model is one-dimensional in a sense that it is given by the ordinary differential equations with respect to the natural parameter of the middle curves of the struts as the variable (arc-length variable). However the model describes full three-dimensional behavior of the stent. Thus, each stent strut is modeled using the curved rod model (see \cite{JT1,JT2} for the rigorous derivation of the model from three-dimensional elasticity) and a set of transmission conditions at joints (vertices of the graph) describing: continuity of displacement and rotation and equilibrium of forces and couples.
Note that the model is not restricted to stents but can be used to model behavior of any elastic structure made of elastic bodies which are thin in two directions (rod--like).

The function space on which the stent model is posed in the weak formulation, includes conditions of inextensibility and ushearability of rods that model struts. Therefore to build a finite element approximation within this function space ($\Vstent$ in Section~\ref{Sstent}) one needs to fulfill these restrictions with each finite element functions. However this is not a simple task. The associated mixed formulation removes the conditions from the function space using the Lagrange multipliers. These conditions then become adjoined equations.

When the stent is considered solely with no interaction with surrounding, natural boundary conditions are of the Neumann type, i.e., contact forces are prescribed. As usual associated to this pure traction problems are two qualitative properties, a necessary condition for the existence (total force and moment are zero, see  (\ref{necessary})) and nonuniqueness of the solution of the problem (up to an infinitesimal rigid deformation, see (\ref{uniqueness})). We remove both notions by fixing the total displacement and total infinitesimal rotation in (\ref{uniqueness}). This adds two more equations in the mixed formulation (and two more Lagrange multipliers which are in $\ZR^3$).

The main tool to obtain the equivalence of the classical weak (variational) formulation and the mixed formulation is the Babuska-Brezzi condition, see \cite[Theorem~4.1 and Corollary 4.1]{GR}
or \cite[Theorem II.1.1]{BrezziFortin}. To obtain it we use techniques applied in \cite{Zugec0} and \cite{Zugec1} for solving a particular problem on the same graph prescribed by the stent, but for simplified constitutive law, zero forces and prescribed non-zero extension and shear of the cross--section. The  solution is obtained by explicitly integrating problems on struts, and then incorporating the solution, with respect to the topology of the graph, into a big algebraic system. The numerical analysis of the discretization together with a description of an implementation of the method will be presented in \cite{GruTamIv}.

In Section~\ref{S2} we start with the problem on a single curved rod and the formulation of its mixed formulation to introduce the problem and present the ideas. Then, in Section~\ref{S3} we formulate the stent model, recall the $H^1$ space on the graph, and prove the Poincar\'{e} type inequality for this space. Using this estimate we prove the ellipticity of the stiffness form on the space that includes inextensibility and unshearability of rods. Finally we prove the Babuska-Brezzi $\inf-\sup$ condition for the stent structure in Lemma~\ref{infsup} which then gives the equivalence of the formulations.

\section{1D curved rod model\label{S2}}
\setcounter{equation}{0}

\subsection{Differential formulation}
A three-dimensional elastic body with its two dimensions small comparing to the third is generally called an elastic rod, see Figure~\ref{strut}.
\begin{figure}[ht]
\begin{center}
\includegraphics[width=0.45\textwidth]{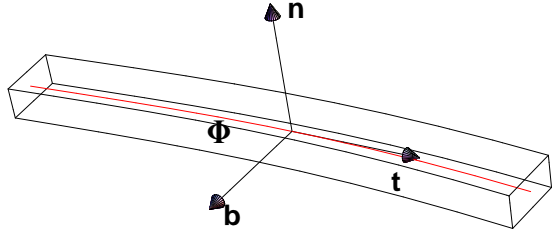}
\end{center}
\caption{3D thin elastic body\label{strut}}
\end{figure}
A curved rod model is a one-dimensional
approximation of a "thin" three-dimensional curved elastic structure
given in terms of the arc-length of the middle curve of the rod as
an unknown variable. Thus in order to build the model a natural parametrization $\Phib:[0,l] \to \ZR^3$ of the middle curve of the curved rod
(red in Figure~\ref{strut}) has to be given.
Further, let the cross-section of a rod be rectangular, of width $w$ in direction of the binormal $\b$ on the middle curve and thickness $t$ in direction of the normal $\nb$ on the middle curve.

One-dimensional equilibrium model for curved elastic rods we use here is given by the following first order system. For a given force with line
density $\fb$ the model is expressed in terms of $(\yb,\thetab,\mb,\nb)$ that satisfy
\begin{align}
\label{lj1r} 0 &= \partial_s \nb + \fb,\\
\label{lj2r} 0 &= \partial_s \mb  +  \tb \times\nb%
,\\
\label{lj3r} 0 &= \partial_s \thetab - \Qbb \Hbb^{-1} \Qbb^T \mb,\\
\label{lj4r}
0 &= \partial_s \yb +\tb \times  \thetab
\end{align}
together with associated boundary condition. Here, $\yb$ is the displacement of the middle curve, $\thetab$ is the vector of the infinitesimal rotation of the cross-sections, $\mb$ is the contact couple and $\nb$ is the contact force. The first two equations describe the balance of contact force and contact moment, respectively,
while the last two equations describe the constitutive relation for a curved, linearly elastic rod.
The last equation can be interpreted as the
condition of inextensibility and unshearability of the rod.
The matrices $\Hbb$ and $\Qbb$ are given by
$$
\Hbb = \left[\begin{array}{ccc}
\mu K & 0 & 0 \\
0 &E I_{11} & E I_{12} \\
0 & E I_{12} & E I_{22} \\
\end{array}\right], \qquad \Qbb = \left[\begin{array}{ccc} \tb & \nb & \b\end{array}\right]
$$
see e.g. \cite{CaoTucker}.
Here  $E = \mu (3\lambda + 2\mu)/(\lambda+\mu)$ is the Youngs modulus of the material
($\mu$ and $\lambda$ are the Lam\'e constants), $I_{ij}$ are the moments of inertia of the
cross-sections and $\mu K$ is the torsion rigidity of the cross-sections.
Therefore, $\Hbb$ describes the elastic properties of the material the rods (struts) are made of
and the geometry of the cross-sections.

This model is linearization of the Antman-Cosserat model for inextensible, unshearable rods,
see \cite{Antman} for the nonlinear model and \cite{IMASJ} for the linearization. The model can also be seen as a linearization of the nonlinear model derived by Scardia in \cite{Scardia} from three-dimensional nonlinear elasticity.
It was show in \cite{JT1} and \cite{JT2} that the solution of the one-dimensional model can be obtained as a limit of solutions of equilibrium equations of three-dimensional elasticity
when thickness of the cross-sections (both, $w$ and $t$) tend to zero. Corresponding result for the dynamic case is given in \cite{Tevolution}. Therefore, for three-dimensional rods which are thin enough one-dimensional curved rod model can provide well enough approximation.
Moreover, in \cite{T1}, it was shown that curved geometry can be approximated with a piecewise straight geometry with an error estimate.
This will further simplify the equations of the one-dimensional model.


\subsection{Mixed and weak formulations}

To obtain classical existence and uniqueness result one classically rewrites the problem in the weak/variational formulation. In the weak formulation inextensibility and unshearability conditions are incorporated in the function space which complicates the numerical approximation of the problem. Therefore the mixed formulation of (\ref{lj1r})-(\ref{lj4r}) is called for.
%

Let us take $(\ytb,\thetatb)\in V = \Vspace$ and multiply (\ref{lj1r}) by $\ytb$ and (\ref{lj2r}) by $\thetatb$ and sum the equations and
integrate them over $[0,\ell]$. We obtain
$$
\aligned
0 = \int_0^\ell \partial_s \nb \cdot \ytb ds + \int_0^\ell \fb \cdot \ytb ds
+\int_0^\ell \partial_s \mb \cdot  \thetatb ds +  \int_0^\ell \tb \times\nb \cdot \thetatb ds.
\endaligned
$$
After partial integration in the first and the third term on the right hand side we obtain
$$
\aligned
0 =&\ - \int_0^\ell  \nb \cdot \partial_s \ytb ds + \int_0^\ell \fb \cdot \ytb ds
- \int_0^\ell \mb \cdot \partial_s  \thetatb ds +  \int_0^\ell \thetatb \times \tb \cdot\nb   ds\\
&\ + \nb(\ell) \cdot  \ytb(\ell) - \nb(0) \cdot  \ytb(0) + \mb(\ell) \cdot   \thetatb(\ell) - \mb(0) \cdot   \thetatb(0).
\endaligned
$$
Inserting  $\mb$  from (\ref{lj3r}) we obtain
$$
\aligned
0 =&\ - \int_0^\ell  \nb \cdot (\partial_s \ytb + \tb \times \thetatb ) ds + \int_0^\ell \fb \cdot \ytb ds
- \int_0^\ell \Qbb \Hbb \Qbb^T \partial_s \thetab  \cdot \partial_s  \thetatb ds \\
&\ + \nb(\ell) \cdot  \ytb(\ell) - \nb(0) \cdot  \ytb(0) + \mb(\ell) \cdot   \thetatb(\ell) - \mb(0) \cdot   \thetatb(0).
\endaligned
$$
Finally from (\ref{lj4r}) for all $\ntb \in L^2(0,\ell; \ZR^3)$ we obtain
$$
\int_0^\ell \ntb \cdot (\partial_s \yb +\tb \times  \thetab) ds =0.
$$
Let us define function spaces $V= \Vspace$, $Q=L^2(0,\ell; \ZR^3)$, bilinear forms
$$
\aligned
&k: V \times V \to \ZR, \qquad k((\yb,\thetab),(\ytb,\thetatb))=\int_0^\ell \Qbb \Hbb \Qbb^T \partial_s \thetab  \cdot \partial_s  \thetatb ds,\\
&b: Q \times V \to \ZR, \qquad b(\nb,(\ytb,\thetatb))=\int_0^\ell  \nb \cdot (\partial_s \ytb + \tb \times \thetatb ) ds,
\endaligned
$$
and the linear functional
$$
l:V \to \ZR, \qquad l(\ytb,\thetatb) = \int_0^\ell
\fb \cdot \ytb ds.
$$
Now the mixed formulation of the one rod problem (\ref{lj1r})-(\ref{lj4r}) is given by: find $((\yb,\thetab),\nb) \in V \times Q$ such that
\begin{equation}\label{weak_one1}
\aligned
&k((\yb,\thetab),(\ytb,\thetatb)) + b(\nb,(\ytb,\thetatb) )\\
&\qquad = l(\ytb,\thetatb) + \mb(\ell)\cdot \thetatb(\ell) - \mb(0)\cdot  \thetatb(0) + \nb(\ell)\cdot  \ytb(\ell) - \nb(0)\cdot  \ytb(0), \quad (\ytb,\thetatb) \in V,\\
&b(\ntb,(\yb,\thetab))=0,\qquad \ntb \in Q.
\endaligned
\end{equation}

For a single rod, the boundary conditions at $s=0,\ell$ need to be prescribed. At this point we assume (the most difficult case) that the rod is clamped at $s=0$ and $s=\ell$, i.e.,
$$
\yb(0)=\thb(0)=\yb(\ell)=\thb(\ell)=0.
$$
Therefore we define the function space containing these boundary conditions, namely
$$
V^0 = \{(\yb,\thb)\in V : \yb(0) = \thb(0)=\yb(\ell)=\thb(\ell)=0\}.
$$
Now the mixed formulation is given by:
find $((\yb,\thetab),\nb) \in V^0 \times Q$ such that
\begin{equation}\label{weak_one}
\aligned
&k((\yb,\thetab),(\ytb,\thetatb)) + b(\nb,(\ytb,\thetatb) ) = l(\ytb,\thetatb) , \quad (\ytb,\thetatb) \in V^0,\\
&b(\ntb,(\yb,\thetab))=0,\qquad \ntb \in Q.
\endaligned
\end{equation}

\begin{remark}\label{rjunction}\em
For the stent problem, the boundary conditions will be given by the kinematic and dynamic contact
conditions. They consist of continuity of displacement and infinitesimal rotation and requirement that the sum of contact forces be equal to zero,
and that the sum of contact moments be equal to zero,
for all rods meeting at the given vertex.
\end{remark}

To the problem (\ref{weak_one}) we can also associate the weak formulation. For that we first define the subspace
$$
V^0_0 =\{(\yb,\thb)\in V^0 : \partial_s \yb + \tb \times \thetab =0\}
$$
of $V^0$ which includes the inextensibility and ushearability condition given by (\ref{lj4r}). Then the weak/variational formulation is given by:
find $(\yb,\thetab)\in V^0_0$ such that
\begin{equation}\label{weak_one_0}
k((\yb,\thetab),(\ytb,\thetatb)) = l(\ytb,\thetatb), \quad (\ytb,\thetatb) \in V^0_0.
\end{equation}

Note also that the form $b$ defines the linear operator $B: V \to Q'$ by
$$
b(\nb, (\ytb,\thetatb)) = {}_{Q'}\langle B (\ytb,\thetatb), \nb \rangle_{Q}.
$$
This operator is important for the analysis of the mixed formulation.

\begin{lemma}\label{lexistence_one}
One has:
\begin{itemize}
\item[a)] the form $k$ is $V^0_0$--elliptic.
\item[b)] $B (V^0)$ is closed in $Q'$.
\begin{itemize}
\item If the parametrization of the middle curve of the rod $\Phib$ is not affine then $\Im{B} = Q=L^2(0,\ell;\ZR^3)$.
\item If $\Phib$ is affine with constant tangent $\tb$ then
$$
B(V^0) = \{\lab \in L^2(0,\ell;\ZR^3): \int_0^\ell \lab (s) ds \cdot \tb =0\}.
$$
\end{itemize}
\end{itemize}
\end{lemma}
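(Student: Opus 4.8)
The plan is to handle the two assertions separately, both resting on the explicit form of the operator $B$: identifying $Q'$ with $Q=L^2(0,\ell;\ZR^3)$ through the Riesz map, one has $B(\yb,\thetab)=\partial_s\yb+\tb\times\thetab$.

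For a) I would first exploit the algebraic structure of the integrand of $k$. The columns of $\Qbb$ are the orthonormal frame $(\tb,\nb,\b)$, so $\Qbb$ is orthogonal, while $\Hbb$ is symmetric positive definite with eigenvalues independent of $s$; hence $\Qbb\Hbb\Qbb^T$ is uniformly positive definite and, with $c$ the smallest eigenvalue of $\Hbb$,
\[
k((\yb,\thetab),(\yb,\thetab)) \ge c\int_0^\ell |\partial_s\thetab|^2\,ds = c\|\partial_s\thetab\|_{L^2}^2 .
\]
Since $\thetab(0)=0$ for elements of $V^0$, the Poincar\'e inequality gives $\|\thetab\|_{H^1}\le C\|\partial_s\thetab\|_{L^2}$. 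To control $\yb$ I would use the defining constraint of $V^0_0$, namely $\partial_s\yb=-\tb\times\thetab$, together with $|\tb|=1$, to obtain $\|\partial_s\yb\|_{L^2}\le\|\thetab\|_{L^2}$; as $\yb(0)=0$ as well, a second Poincar\'e inequality bounds $\|\yb\|_{H^1}$ by $\|\partial_s\thetab\|_{L^2}$. Summing the two estimates yields $\|(\yb,\thetab)\|_V^2\le C'\,k((\yb,\thetab),(\yb,\thetab))$, i.e. the claimed $V^0_0$--ellipticity.

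For b) the range of $B$ reduces, by integrating in $s$, to a finite-dimensional question about the auxiliary linear map
\[
M:H^1_0(0,\ell;\ZR^3)\to\ZR^3,\qquad M\thetab=\int_0^\ell \tb(s)\times\thetab(s)\,ds .
\]
Indeed, given $\lab\in L^2$, to realize $\lab=\partial_s\yb+\tb\times\thetab$ with $(\yb,\thetab)\in V^0$, I would choose $\thetab\in H^1_0$ and then set $\yb(s)=\int_0^s(\lab-\tb\times\thetab)\,d\sigma$; this forces $\yb(0)=0$ and $B(\yb,\thetab)=\lab$, while the one remaining endpoint condition $\yb(\ell)=0$ is exactly the requirement $M\thetab=\int_0^\ell\lab\,ds$. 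Thus $B(V^0)=\{\lab: \int_0^\ell\lab\,ds\in\Ran M\}$, and everything hinges on $\Ran M$. To compute it I pass to the orthogonal complement: using $\ab\cdot(\tb\times\thetab)=(\ab\times\tb)\cdot\thetab$, a vector $\ab$ is orthogonal to $\Ran M$ iff $\int_0^\ell(\ab\times\tb)\cdot\thetab\,ds=0$ for all $\thetab\in H^1_0$, i.e. (by density of $H^1_0$ in $L^2$) iff $\ab\times\tb(s)=\0b$ a.e., i.e. $\tb(s)\parallel\ab$. If $\Phib$ is not affine the unit tangent $\tb$ is nonconstant and cannot be everywhere parallel to a fixed vector, so $(\Ran M)^\perp=\{\0b\}$, $M$ is onto $\ZR^3$, the compatibility is vacuous, and $\Im{B}=Q$ (closed). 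If $\Phib$ is affine with constant $\tb$, then $M\thetab=\tb\times\int_0^\ell\thetab\,ds$ has range exactly $\tb^\perp$; since $\int_0^\ell\lab\,ds\in\tb^\perp$ means $\tb\cdot\int_0^\ell\lab\,ds=0$, one gets $B(V^0)=\{\lab: \int_0^\ell\lab\,ds\cdot\tb=0\}$, the kernel of a bounded functional and hence closed.

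The main obstacle is the range computation for $M$: this is where the affine/non-affine dichotomy enters and where the geometric fact that a nonconstant unit tangent cannot stay parallel to a fixed direction is decisive. The construction of $\yb$ by integration and the reduction to $M$ are routine, as is the positivity argument in a); but identifying $(\Ran M)^\perp$ correctly, and recognizing that the single endpoint constraint $\yb(\ell)=0$ is the only obstruction to surjectivity, is the crux that separates the two cases and delivers closedness in both.
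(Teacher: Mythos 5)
Your proof is correct. Part a) is essentially the paper's argument (coercivity of $\Qbb\Hbb\Qbb^T$, the constraint $\partial_s\yb=-\tb\times\thetab$, and Poincar\'e applied to $\thetab$ and $\yb$ via the clamped ends), so there is nothing to compare there. Part b), however, takes a genuinely different and more economical route. The paper embeds the single equation $\yb'+\tb\times\thb=\lab$ into an auxiliary ODE system ($\nb'=0$, $\mb'+\tb\times\nb=0$, $\thb'=\mb$), integrates it explicitly, and reduces the boundary conditions to a $6\times6$ linear system in the constants $(\Mb,\Nb)$; the image of $B$ is then read off from the kernel and range of that symmetric positive semidefinite matrix, which is shown to be regular exactly when $\Phib$ is not affine. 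You instead pick $\thetab\in H^1_0$ freely, define $\yb$ by integration, observe that the only residual constraint is $\yb(\ell)=0$, i.e.\ $M\thetab=\int_0^\ell\lab\,ds$ for $M\thetab=\int_0^\ell\tb\times\thetab\,ds$, and compute $(\Ran M)^\perp$ by the identity $\ab\cdot(\tb\times\thetab)=(\ab\times\tb)\cdot\thetab$; the affine/non-affine dichotomy then appears transparently as ``$\tb$ parallel to a fixed vector a.e.''\ (which, by continuity of the unit tangent, forces a straight rod). Your argument is shorter and isolates the true obstruction, and it suffices for the lemma as stated (closedness plus identification of the image). What the paper's heavier construction buys is a concrete right inverse with quantitative control by the inverse of the $6\times6$ matrix, and, more importantly, a template that transfers to the stent graph in Lemma~\ref{infsup}, where the free choice of $\thetab$ on each edge is no longer available because of the vertex coupling conditions; your direct reduction would not carry over there without substantial modification.
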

\begin{proof}
a) For $(\yb,\thetab)\in V^0_0$ we estimate using the Poincare inequality three times
$$
\aligned
\|(\yb,\thetab)\|_V^2 &\leq C( \|\yb'\|_{L^2(0,\ell;\ZR^3)}^2 + \|\partial_s \thetab\|_{L^2(0,\ell;\ZR^3)}^2)\\
&\leq C( \|\yb'+ \tb \times \thetab\|_{L^2(0,\ell;\ZR^3)}^2 + \|\tb \times \thetab\|_{L^2(0,\ell;\ZR^3)}^2 +  \|\partial_s \thetab\|_{L^2(0,\ell;\ZR^3)}^2)\\
&\leq C\|\partial_s  \thetab\|_{L^2(0,\ell;\ZR^3)}^2 = C\|\Qbb^T \partial_s \thetab\|_{L^2(0,\ell;\ZR^3)}^2\\
&\leq \frac{C}{\min{\sigma(\Hbb)}} k((\yb,\thetab),(\yb,\thetab)).
\endaligned
$$

b) Let us take $\lab \in Q$. We try to find $(\yb,\thb) \in V^0$ such that
$$
\lab = \yb'+ \tb \times \thb.
$$
For that we search for the solution of the system
\begin{equation}\label{system}
\aligned
&\nb'=0,\\
&\mb' + \tb \times \nb=0,\\
&\thb'-\mb=0,\\
&\yb' + \tb \times \thb =\lab
\endaligned
\end{equation}
such that  $(\yb,\thb) \in V^0$. Note that we actually need to solve only the last equation and that the first three equations are arbitrary. However this formulates the system very similar to the one already analyzed in \cite{Zugec0}. Therefore, for some constants $\Nb,\Mb \in \ZR^3$ one has
$$
\nb(x) = \Nb, \quad \mb(x) = \Mb - \int_0^x \tb(s) \times \Nb ds = \Mb - \Abb_{\Phib(x)-\Phib(0)} \Nb,
$$
where $\Abb_\vb$ is the skew-symmetric matrix associated with the vector $\vb$, i.e., $\Abb_\vb \xb = \vb \times \xb$.
Integrating the third equation in the system we obtain
$$
-\thb(x) = \thb(\ell)-\thb(x) = \int_x^\ell \mb(s) ds = \int_x^\ell \Mb - \Abb_{\Phib(s)-\Phib(0)} \Nb ds= (\ell-x) \Mb - \int_x^\ell \Abb_{\Phib(s)-\Phib(0)} ds \Nb.
$$
Now from the fourth equation we obtain
$$
\aligned
-\yb(x) &= \yb(\ell) - \yb(x) = \int_x^\ell \lab(s) -\tb(s) \times \thb(s) ds = \int_x^\ell \lab(s)ds -\int_x^\ell\Phib'(s) \times \thb(s) ds \\
&= \int_x^\ell \lab(s)ds +\int_x^\ell\Phib(s) \times \thb'(s) ds   + \Phib(x) \times \thb(x)\\
&= \int_x^\ell \lab(s)ds +\int_x^\ell\Phib(s) \times \mb(s) ds   + \Phib(x) \times \thb(x)\\
&= \int_x^\ell \lab(s)ds +\int_x^\ell\Phib(s) \times \left( \Mb - \Abb_{\Phib(s)-\Phib(0)} \Nb \right)ds   + \Phib(x) \times \thb(x)\\
&= \int_x^\ell \lab(s)ds +\int_x^\ell \Abb_{\Phib(s)} ds \Mb - \int_x^\ell \Abb_{\Phib(s)}  \Abb_{\Phib(s)-\Phib(0)} ds  \Nb   + \Phib(x) \times \thb(x).
\endaligned
$$
Applying the boundary conditions at $x=0$, for $\yb,\thb$ we obtain the equations
$$
\aligned
& 0 = \ell \Mb - \int_0^\ell \Abb_{\Phib(s)-\Phib(0)} ds \Nb,\\
& 0 = \int_0^\ell \lab(s)ds +\int_0^\ell \Abb_{\Phib(s)} ds \Mb - \int_0^\ell \Abb_{\Phib(s)}  \Abb_{\Phib(s)-\Phib(0)} ds  \Nb.
\endaligned
$$
Multiplying the first equation by $\Abb_{-\Phib(0)}$ and adding to the second we obtain
$$
0 = \int_0^\ell \lab(s)ds +\int_0^\ell \Abb_{\Phib(s)-\Phib(0)} ds \Mb - \int_0^\ell \Abb_{\Phib(s)-\Phib(0)}  \Abb_{\Phib(s)-\Phib(0)} ds  \Nb.
$$
Thus the system is given by
\begin{equation}\label{system6}
\left[\begin{array}{cc}
\ell \Ibb & - \int_0^\ell \Abb_{\Phib(s)-\Phib(0)} ds \\
\int_0^\ell \Abb_{\Phib(s)-\Phib(0)} ds & -\int_0^\ell \Abb_{\Phib(s)-\Phib(0)} \Abb_{\Phib(s)-\Phib(0)} ds
\end{array}\right] \left[ \begin{array}{c}
\Mb\\
\Nb
\end{array}\right] =
\left[ \begin{array}{c}
0\\
- \int_0^\ell \lab(s) ds
\end{array}\right].
\end{equation}
The matrix of the system, denote it by $\Mbbbb$, is symmetric. Moreover, for all $\Mb,\Nb \in \ZR^3$
$$
\aligned
\Mbbbb \left[ \begin{array}{c}
\Mb\\
\Nb
\end{array}\right] \times \left[ \begin{array}{c}
\Mb\\
\Nb
\end{array}\right] &= \ell \Mb \cdot \Mb - 2\int_0^\ell \Abb_{\Phib(s)-\Phib(0)} ds \Nb \cdot \Mb -\int_0^\ell \Abb_{\Phib(s)-\Phib(0)} \Abb_{\Phib(s)-\Phib(0)} ds \Nb \cdot \Nb\\
&=\int_0^\ell \left( \Mb - \Abb_{\Phib(s)-\Phib(0)} \Nb\right) \cdot \left( \Mb - \Abb_{\Phib(s)-\Phib(0)} \Nb\right)ds \geq 0
\endaligned
$$
and thus is positive semidefinite. It is positive definite unless, for some $0 \neq (\Mb,\Nb) \in \ZR^6$
$$
0 = \int_0^\ell \left( \Mb - \Abb_{\Phib(s)-\Phib(0)} \Nb\right) \cdot \left( \Mb - \Abb_{\Phib(s)-\Phib(0)} \Nb\right)ds.
$$
Then
$$
\Mb - \Abb_{\Phib(s)-\Phib(0)} \Nb = 0, \qquad s \in [0,\ell].
$$
Choosing $s=0$ we obtain $\Mb=0$. Therefore
$$
\Abb_{\Phib(s)-\Phib(0)} \Nb = 0, \qquad s \in [0,\ell].
$$
If $\Phib$ is not affine (the rod is not straight) this implies $\Nb=0$ and therefore $\Mbbbb$ is regular. Hence we have the unique solution of (\ref{system}). Therefore $\Im{B} = Q'= L^2(0,\ell)$.

If $\Phib$ is affine, i.e., $\Phib(s) -\Phib(0)= s \tb$, for some constant vector $\tb$ then $\Mbbbb$ is of rank $5$ with $\Ker{\Mbbbb} = \sspan\{(0,\tb)\}$ and with the image $\Im{\Mbbbb} = \{(0,\tb)\}^\perp$. Thus the system (\ref{system6}) has solution if and only if
\begin{equation}\label{uvjet}
\int_0^\ell \lab(s) ds \cdot \tb = 0.
\end{equation}
Therefore the system (\ref{system}) has a solution if and only if  (\ref{uvjet}) holds.
Therefore $\{\lab \in L^2(0,\ell;\ZR^3): \int_0^\ell \lab (s) ds \cdot \tb =0\} \subseteq \Im{B}$.
On the other hand for $\lab \in \Im{B}$ one has
$$
\yb' + \tb \times \thb = \lab.
$$
Integrating over $[0,\ell]$ we obtain
$$
\tb \times \int_0^\ell\thb(s) ds = \int_0^\ell \lab(s) ds.
$$
Thus (\ref{uvjet}) is fulfilled, $B(V^0) = \{\lab \in L^2(0,\ell;\ZR^3): \int_0^\ell \lab (s) ds \cdot \tb =0\}$ and thus $\Im B$ is closed (moreover of codimension 1).
\end{proof}

%

\begin{proposition}\label{pexistence_one}
\begin{itemize}
\item[a)] For every $\fb \in L^2(0,\ell; \ZR^3)$ the problem (\ref{weak_one_0}) has a unique solution $(\yb,\thetab) \in V_0$.

\item[b)]
For every $\fb \in L^2(0,\ell; \ZR^3)$ the problem (\ref{weak_one}) has a unique solution $((\yb,\thetab),\nb)\in V\times Q$. The function $(\yb,\thetab)$ also satisfies (\ref{weak_one_0}).

\item[c)] If $(\yb,\thb)\in V_0$ is the solution of (\ref{weak_one_0}) then there is $\nb\in Q$ such that $((\yb,\thb),\nb)$ is solution of (\ref{weak_one}).
\end{itemize}
\end{proposition}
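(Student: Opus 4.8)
The plan is to obtain parts (a)--(c) from the abstract theory of saddle-point problems (\cite[Theorem~4.1 and Corollary 4.1]{GR}, \cite[Theorem~II.1.1]{BrezziFortin}), whose two structural hypotheses are already delivered by Lemma~\ref{lexistence_one}: ellipticity of $k$ on the kernel $\Ker B=V^0_0$ (part a) and closedness/surjectivity of $B$ (part b). Rather than quoting the theorem verbatim I would prove (a) by hand, then record the elementary bridge between the two formulations, then build the multiplier, and finally assemble (b). For part (a) the tool is the Lax--Milgram lemma on the closed (hence Hilbert) subspace $V^0_0\subset V^0$: continuity of $k$ on $V^0_0\times V^0_0$ follows from boundedness of the coefficient field $\Qbb\Hbb\Qbb^{T}$ together with Cauchy--Schwarz, $V^0_0$-ellipticity is exactly Lemma~\ref{lexistence_one}(a), and $l$ is clearly continuous. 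Since $k$ is symmetric this is equivalently the minimization of $\frac12 k(\cdot,\cdot)-l(\cdot)$, and either route yields a unique $(\yb,\thetab)\in V^0_0$ solving (\ref{weak_one_0}).

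Next I would record the bridge that ties the formulations together. For \emph{any} solution $((\yb,\thetab),\nb)$ of the mixed problem (\ref{weak_one}), the second equation says precisely $(\yb,\thetab)\in\Ker B=V^0_0$. Testing the first equation only with $(\ytb,\thetatb)\in V^0_0=\Ker B$ kills the term $b(\nb,(\ytb,\thetatb))$, so $(\yb,\thetab)$ satisfies $k((\yb,\thetab),(\ytb,\thetatb))=l(\ytb,\thetatb)$ for all $(\ytb,\thetatb)\in V^0_0$, i.e. it solves (\ref{weak_one_0}). By the uniqueness from (a) the primal component of every mixed solution is the weak solution; this already proves the final assertion of (b) and the uniqueness of the primal part.

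For part (c), and thereby existence in (b), I would start from the weak solution $(\yb,\thetab)\in V^0_0$ of (a) and define the residual functional $F\in (V^0)'$ by $F(\ytb,\thetatb)=l(\ytb,\thetatb)-k((\yb,\thetab),(\ytb,\thetatb))$. Because $(\yb,\thetab)$ satisfies (\ref{weak_one_0}), $F$ annihilates $\Ker B=V^0_0$. Since $\Im B$ is closed by Lemma~\ref{lexistence_one}(b), the closed-range theorem gives $\Im(B^{T})=(\Ker B)^{\circ}$ (the annihilator), so $F=B^{T}\nb$ for some $\nb\in Q$; this $\nb$ is exactly a Lagrange multiplier, and $((\yb,\thetab),\nb)$ solves (\ref{weak_one}). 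This establishes (c), and combined with (a) it gives existence in (b).

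It remains to prove uniqueness of the multiplier, which is where I expect the only real subtlety. Applying the argument to the homogeneous problem ($\fb=0$), the bridge forces $(\yb,\thetab)=0$, so the first equation collapses to $b(\nb,(\ytb,\thetatb))=0$ for all $(\ytb,\thetatb)\in V^0$, i.e. $B^{T}\nb=0$. Now injectivity of $B^{T}$ is equivalent to $\Im B$ being dense in $Q'$, and by Lemma~\ref{lexistence_one}(b) this holds exactly when the rod is genuinely curved ($\Phib$ not affine), in which case $\Im B=Q'=Q$ and therefore $\nb=0$. This last step is the delicate one: uniqueness of the multiplier is precisely the $\inf$--$\sup$ (surjectivity of $B$) half of the Babu\v{s}ka--Brezzi conditions. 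In the affine case $\Im B$ has codimension one, so $\Ker{B^{T}}=\sspan\{\tb\}$ (constant fields along the tangent) is one-dimensional and the multiplier is determined only up to this space; hence I expect the uniqueness claim in (b) to be read under the non-affine hypothesis (or with this one-dimensional indeterminacy acknowledged), the genuinely new input being the surjectivity of $B$ already secured in Lemma~\ref{lexistence_one}(b).
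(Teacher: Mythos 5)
Your proof is correct and follows essentially the same route as the paper: part (a) by Lax--Milgram on the closed subspace $V^0_0$ using the ellipticity of Lemma~\ref{lexistence_one}a), and parts (b)--(c) by the abstract theory of constrained variational problems, which the paper simply cites from \cite{GR} and \cite{BrezziFortin} while you unpack it via the kernel-testing bridge and the closed-range theorem. Your closing caveat is well taken and is the one substantive point the paper's one-line citation glosses over: when $\Phib$ is affine, Lemma~\ref{lexistence_one}b) gives $\Im B$ of codimension one, so the multiplier $\nb$ is determined only modulo $\sspan\{\tb\}$, and the unqualified uniqueness asserted in b) genuinely requires the non-affine hypothesis (or a reading of $\nb$ in the quotient $Q/\Ker{B^T}$).
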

\begin{proof}
The statement a) is a direct consequence of Lemma~\ref{lexistence_one}a), continuity of forms $k$ and $b$ and linear functional $l$ and the Lax--Milgram lemma.

Statements b) and c) are classical results about the linear variational problems with constraints, see \cite[Theorem~4.1 and Corollary 4.1]{GR}
or \cite[Theorem II.1.1]{BrezziFortin}.
\end{proof}

\section{Stent as a 3D net of 1D curved rods\label{S3}}\label{Sstent}
\setcounter{equation}{0}

\subsection{Differential formulation}

As mentioned earlier, stent is a three-dimensional elastic body defined as a union of three-dimensional struts.
\begin{figure}[ht]
\begin{center}
\includegraphics[width=0.65\textwidth]{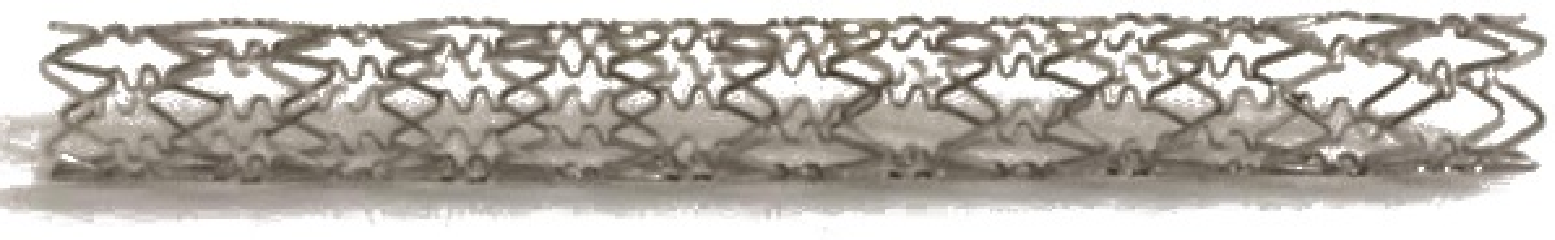}\\
\includegraphics[width=0.65\textwidth]{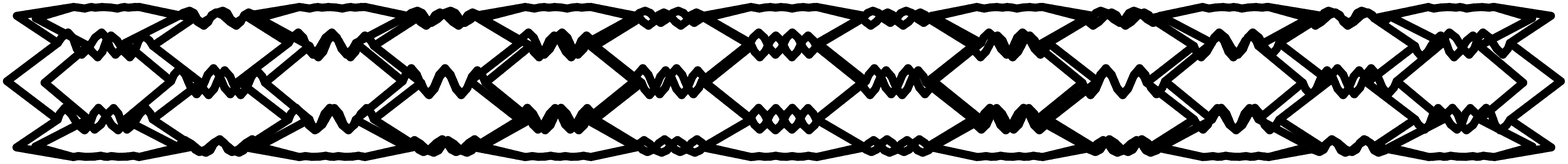}
\end{center}
\caption{Cypher stent by Cordis (upper figure) and its 1d computer idealization (lower figure)\label{cypher}}
\end{figure}
Each strut we model by the one-dimensional curved rod model.
Thus for defining the one-dimensional model we only need to prescribe:
\begin{itemize}
\item $\cV$ set of $n_\cV$ vertices of the stent (points where middle lines meet),
\item $\cN$ set of $n_\cN$ edges of the stent (pairing of vertices),
\item $\Phib^i:[0,\ell_i] \to \ZR^3$ parametrization of the middle line of $i$th strut (edge $e_i \in \cN$), $i=1,\ldots,n_\cN$,
\item $\mu_i,E_i$ parameters of material from which $i$th strut is made of, $i=1,\ldots,n_\cN$,
\item $I^i_{\alpha,\beta}$ $\alpha, \beta=1,2$ and $K^i$ moments of inertia and torsional rigidity of cross-sections of $i$th strut, $i=1,\ldots,n_\cN$.
\end{itemize}
Note that $(\cV, \cN)$ defines a graph and sets the topology of the stent. Adding precise geometry of struts by prescribing parametrizations is also important. This actually introduces orientation in the
graph however it is not important for the mechanics of the system.

A one-dimensional model of the given three-dimensional stent is given by the family of equations on each strut (edge)
\begin{eqnarray}
\label{lj1ri} &&0 = \partial_s \nb^i + \fb^i,\\
\label{lj2ri} &&0 = \partial_s \mb^i  +  \tb^i \times\nb^i%
,\\
\label{lj3ri} &&0 = \partial_s \thetab^i - \Qbb^i (\Hbb^i)^{-1} (\Qbb^i)^T \mb^i,\\
\label{lj4ri}
&&0 = \partial_s \yb^i +\tb^i \times  \thetab^i
\end{eqnarray}
for all $e_i \in \cN$. Additionally,
%
we need to prescribe the coupling conditions that need to be satisfied at each vertex of the stent net
where the edges (stent struts) meet.
As mentioned earlier, two sets of coupling conditions hold:
\begin{itemize}
\item the kinematic coupling condition: $(\yb,\thetab)$ continuous at each vertex,
\item the dynamic coupling condition: balance of contact forces ($\nb$) and contact moments ($\mb$) at each vertex,
\end{itemize}
\begin{equation}\label{ccs}
\aligned
&\sum_{i\in J^+_j} \nb^i(\ell_i) - \sum_{i\in J^-_j} \nb^i(0) = 0, \qquad j=1,\ldots, n_\cV,\\
&\sum_{i\in J^+_j} \mb^i(\ell_i) - \sum_{i\in J^-_j} \mb^i(0) = 0, \qquad j=1,\ldots, n_\cV,\\
&
\thetab^i(0) = \thetab^k(\ell^k), \qquad i \in J^-_j, k\in J^+_j, \qquad j=1,\ldots, n_\cV,\\
&
\yb^i(0) = \yb^k(\ell^k), \qquad i \in J^-_j, k\in J^+_j, \qquad j=1,\ldots, n_\cV;
\endaligned
\end{equation}
here $J^-_j$ stands for the set of all edges that leave (i.e. the local variable is  equal 0 at) the vertex $j$ and $J^+_j$ stands for the set of all edges that enter (i.e. the local variable is  equal $\ell$ at) the vertex $j$.

This constitutes the one-dimensional model of stents. Since the body is not fixed at any point the solution is not unique and there is associated necessary condition for the existence. It is easy to check that the functions
\begin{equation}\label{ker}
\yb^i (s)= \const_y - \Phib^i (s) \times \const_\theta, \qquad  \thb^i(s) = \const_\theta, \qquad i =1, \ldots, n_\cN
\end{equation}
satisfy (\ref{lj1ri})--(\ref{lj4ri}) for zero loads ($\fb^i=0$, $i=1,\ldots,n_\cN$). Further, to obtain the problem with unique solution we add the conditions
\begin{equation}\label{uniqueness}
 \sum_{i=1}^{n_\cN} \int_0^{\ell_i} \yb^i ds =  \sum_{i=1}^{n_\cN} \int_0^{\ell_i} \thb^i ds =0.
\end{equation}
Accompanied with this nonuniqueness are necessary conditions for the existence
\begin{equation}\label{necessary}
\sum_{i=1}^{n_\cN} \int_0^{\ell_i}\fb^i (s) ds = 0, \qquad  \sum_{i=1}^{n_\cN} \int_0^{\ell_i}\Phib^i(s) \times \fb^i (s) ds =0.
\end{equation}
These conditions are exactly the equilibrium conditions for the total force and the total moment.
For more details in a bit more complex setting see \cite{Zugec0, Zugec1}.

Thus the model of stents we consider in the sequel is given by the rod equations (\ref{lj1ri})--(\ref{lj4ri}), contact conditions (\ref{ccs}) and the conditions of zero mean displacement and zero mean rotation (\ref{uniqueness}).

\subsection{Weak and mixed formulations}

Next we turn to the weak and mixed formulation of the problem.
The kinematic coupling conditions are satisfied by including this condition into the space of test functions, thereby requiring that all possible candidates
for the solution must satisfy the continuity of displacement and
the continuity of infinitesimal rotation at every net vertex
(avoiding the stent rupture (caused by jump in displacements or infinitesimal rotations of the cross-section), in which case the model equations cease to be valid).
We begin by first defining the space of $H^1$-functions $\ub$, defined on the entire stent net $\cN$, such that
they satisfy the
kinematic coupling condition at each vertex $\Vb\in\cV$.
The vector function $\ub$ consist of all the state variables $(\yb,\thetab)$ defined on all the edges $\eb^i, i = 1,...,n_\cN$,
so that
$$
\ub = (\ub^1,...,\ub^{n_\cN}) = ((\yb^1,\thetab^1),...,(\yb^{n_\cN},\thetab^{n_\cN})).
$$
The kinematic coupling condition requires that the displacement of the middle line $\yb$, and the infinitesimal rotation of the cross-section $\thetab$,
are continuous at every vertex $\Vb\in\cV$.
More precisely, at each vertex $\Vb\in\cV$ at which the edges $\eb^i$ and $\eb^j$ meet,
the kinematic condition says that the trace of $\ub^i$ evaluated at the value of the parameter $s \in \{0,\ell_i\}$
that corresponds to the vertex $\Vb$, i.e., $\ub^i((\Phib^i)^{-1}(\Vb))$, has to be equal to the trace
$\ub^j((\Phib^j)^{-1}(\Vb))$.
Thus, for $k\in \ZN$, we define the space
$$
\aligned
H^1(\cN; \ZR^k) = \bigg\{ \ub&=(\ub^1, \ldots, \ub^{n_\cN}) \in \prod_{i=1}^{n_\cN} H^1(0,\ell_i; \ZR^k) : \\
& \ub^i((\Phib^i)^{-1}(\Vb)) = \ub^j((\Phib^j)^{-1}(\Vb)), \
\forall \Vb \in \cV, \Vb \in \eb^i \cap \eb^j \bigg\}.
\endaligned
$$

The dynamic coupling conditions, however,
are satisfied in the weak sense by imposing this condition in the weak formulation of the underlying equations.
To get to the weak formulation of the mixed formulation of the stent problem we sum up the weak forms of the mixed formulations for each strut on a
space of functions $V_S= H^1(\cN;\ZR^6)$ which are defined on the whole stent and which are continuous in vertices (globally continuous). Then by the dynamic contact
conditions contact couples and forces from the right hand side of (\ref{weak_one}) cancel out.
Let us denote by
$$
V_S = H^1(\cN;\ZR^6), \qquad Q_S = L^2(\cN;\ZR^3) \times \ZR^3 \times \ZR^3= \prod_{i=1}^{n_\cN} L^2 (0,\ell_i; \ZR^3) \times \ZR^3 \times \ZR^3
$$
the function spaces and define bilinear forms obtained by summing the associated forms for each rod
\begin{equation}\label{forme_stent}
\aligned
k_S: V_S \times V_S \to \ZR, \qquad k_S(\ub_S,\utb_S)=&\ \sum_{i=1}^{n_\cN}\int_0^{\ell_i} \Qbb^i \Hbb^i (\Qbb^i)^T \partial_s \thetab^i  \cdot \partial_s  \thetatb^i ds,\\
b_S: Q_S \times V_S \to \ZR, \qquad b_S(\nb_S,\utb_S)=&\ \sum_{i=1}^{n_\cN}\int_0^{\ell_i}  \nb^i \cdot (\partial_s \ytb^i + \tb^i \times \thetatb^i ) ds\\
&\ + \alb \cdot \sum_{i=1}^{n_\cN} \int_0^{\ell_i} \ytb^i ds+ \betab \cdot \sum_{i=1}^{n_\cN} \int_0^{\ell_i} \thetatb^i ds
\endaligned
\end{equation}
and the linear functional
$$
l_S:V_S \to \ZR, \qquad l_S(\utb_S) = \sum_{i=1}^{n_\cN} \int_0^{\ell_i} \fb^i \cdot \ytb^i ds;
$$
here, we use the notation
$$
\nb_S = (\nb^1, \ldots, \nb^{n_\cN}, \alb,\betab).
$$
Let us also define the function space
$$
\Vstent = \{\utb_S\in V_S : b_S(\ntb_S,\ub_S)=0,\ntb_S\in Q_S\}.
$$
Now the weak formulation is given by:
find $\ub_S\in \Vstent$ such that
\begin{equation}\label{weak_stent_0}
k_S(\ub_S,\utb_S) = l_S(\utb_S), \quad \utb_S \in \Vstent.
\end{equation}
More details on the model can be found in \cite{IMASJ}. This model actually is not limited for stents. It can be used to model any elastic structure made of rods. The associated static model is rigorously justified in \cite{Griso} from three-dimensional linearized elasticity.
The weak formulation of the problem is essential for obtaining the numerical approximation using the finite element method.
However because of the inextensiblity and unshearability constraints that are difficult to satisfy by the finite elements the mixed problem is important.

%
%
%
%

The mixed formulation of the problem (\ref{weak_stent_0}) is given by: find $(\ub_S,\nb_S) \in V_S \times Q_S$ such that
\begin{equation}\label{mixed_stent}
\aligned
& k_S(\ub_S,\utb_S) + b_S(\nb_S,\utb_S) = l_S(\utb_S) , \quad \utb_S \in V_S,\\
&b_S (\ntb_S,\ub_S)=0,\qquad \ntb_S \in Q_S.
\endaligned
\end{equation}

In the sequel we analyze the existence and uniqueness and the relation of the solution of the weak and mixed formulations.
Obvious part is that any solution $(\ub_S,\nb_S) \in V_S \times Q_S$ of the mixed formulation is the solution of the weak formulation. For the opposite some more analysis has to be made.

Equivalence of the weak formulation and the mixed formulation is important in order to go back to the strong formulation. Then from the mixed formulation it is easy to conclude the regularity result, namely
$$
\nb^i \in H^1(0,\ell_i), \quad \mb^i \in H^2(0,\ell_i), \quad \thb^i \in H^3(0,\ell_i), \quad \ub^i \in H^4(0,\ell_i).
$$
Even more smoothness is obtained if the force density is assumed more regular on edges.
Further, the strong formulation is then given by: find  $(\yb^i, \thetab^i, \mb^i, \nb^i)$, $i=1,\ldots,n_\cN$ and $\alb, \betab \in \ZR^3$ that satisfy the equations at edges
\begin{equation}\label{stent}
\aligned
&\partial_s \nb^i - \alb +\fb^i =0, \qquad i=1,\ldots,n_\cN,\\
&\partial_s \mb^i + \tb^i \times \nb^i - \betab =0, \qquad i=1,\ldots,n_\cN,\\
&\partial_s \thetab^i - \Qbb^i (\Hbb^i)^{-1} (\Qbb^i)^T \mb^i =0, \qquad i=1,\ldots,n_\cN,\\
&\partial_s \yb^i + \tb^i \times \thetab^i =0, \qquad i=1,\ldots,n_\cN,
\endaligned
\end{equation}
that has total mean displacement and rotation zero, i.e. (\ref{uniqueness}) holds
and that the kinematic and dynamic contact conditions (\ref{ccs}) at all vertices hold.

\begin{remark}\label{rab}\em
Inserting the displacements and infinitesimal rotations from the kernel of the stent operator, i.e. of the form (\ref{ker}), in the mixed formulation (\ref{mixed_stent}) we obtain the equation
$$
\alb \cdot \sum_{i=1}^{n_\cN} \int_0^{\ell_i} (\const_y - \Phib^i (s) \times \const_\theta) ds+ \betab \cdot \sum_{i=1}^{n_\cN} \int_0^{\ell_i} \const_\theta ds =  \sum_{i=1}^{n_\cN} \int_0^{\ell_i} \fb^i \cdot (\const_y - \Phib^i (s) \times \const_\theta) ds,
$$
for all $\const_y, \const_\theta \in \ZR^3$. This implies
$$
\aligned
&\alb   =  \sum_{i=1}^{n_\cN} \int_0^{\ell_i} \fb^i ds \bigg\slash \sum_{i=1}^{n_\cN} \ell_i ,\\
& \betab  =  \left(-\sum_{i=1}^{n_\cN} \int_0^{\ell_i} \fb^i \times \Phib^i (s)  ds + \alb \times \sum_{i=1}^{n_\cN} \int_0^{\ell_i}  \Phib^i (s)  ds\right)\bigg\slash \sum_{i=1}^{n_\cN}\ell_i.
\endaligned
$$
Note that the weak formulation (\ref{weak_stent_0}) will have a unique solution as a consequence of Lemma~\ref{lexistence_stent}, since the form $k_S$ is $\Vstent$--elliptic. Also, there is no necessary condition for the existence. The role here has been done by the multipliers $\alb$ and $\betab$ since they are chosen such that $\fb^i-\alb$ satisfy the necessary conditions of the form (\ref{necessary}).
Note further that if we take loads that satisfy necessary condition (\ref{necessary}) then $\alb=\betab=0$.
\end{remark}

First we prove the Poincare type estimate on the stent.
\begin{lemma}\label{lPoincare}
There is $C>0$ such that
$$\aligned
\sum_{i=1}^{n_\cN} \|\yb^i\|^2_{H^1(0,\ell_i;\ZR^3)} &\leq C_P \left(\sum_{i=1}^{n_\cN} \|{\yb^i}'\|^2_{L^2(0,\ell_i;\ZR^3)}
+ |\sum_{i=1}^{n_\cN}\int_0^{\ell_i} \yb^i|^2\right),\\
\yb_S &= (\yb^1, \ldots,\yb^{n_\cN}) \in H^1(\cN;\ZR^3).
\endaligned$$
\end{lemma}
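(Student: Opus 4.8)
The plan is to argue by contradiction using the standard compactness (Rellich) scheme by which one-dimensional Poincar\'e--Wirtinger inequalities are upgraded to constrained spaces. Denote by $R(\yb_S)$ the right-hand side
$$
R(\yb_S) = \sum_{i=1}^{n_\cN} \|{\yb^i}'\|^2_{L^2(0,\ell_i;\ZR^3)} + \Big|\sum_{i=1}^{n_\cN}\int_0^{\ell_i} \yb^i\,ds\Big|^2 .
$$
If no such $C_P$ existed, there would be a sequence $\yb_{S,n}=(\yb_n^1,\ldots,\yb_n^{n_\cN})\in H^1(\cN;\ZR^3)$ with
$$
\sum_{i=1}^{n_\cN} \|\yb_n^i\|^2_{H^1(0,\ell_i;\ZR^3)} = 1, \qquad R(\yb_{S,n}) \to 0 .
$$
The normalization makes $(\yb_{S,n})_n$ bounded in the Hilbert space $H^1(\cN;\ZR^3)$, so after passing to a subsequence it converges weakly to some $\yb_S\in H^1(\cN;\ZR^3)$; the limit still lies in $H^1(\cN;\ZR^3)$ because the vertex continuity conditions cut out a closed, hence weakly closed, subspace of $\prod_i H^1(0,\ell_i;\ZR^3)$.

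Next I would exploit the compact embedding $H^1(0,\ell_i;\ZR^3)\hookrightarrow L^2(0,\ell_i;\ZR^3)$ on each edge, so that along the subsequence $\yb_n^i\to\yb^i$ strongly in $L^2$. From $R(\yb_{S,n})\to 0$ two facts follow. First, ${\yb_n^i}'\to 0$ in $L^2$ for each $i$; since $\yb_{S,n}\rightharpoonup \yb_S$ in $H^1$ forces ${\yb_n^i}'\rightharpoonup {\yb^i}'$ in $L^2$ while ${\yb_n^i}'\to 0$ strongly, we obtain ${\yb^i}'=0$, i.e.\ each $\yb^i$ is constant on its edge. Second, the strong $L^2$-convergence gives $\sum_{i}\int_0^{\ell_i}\yb_n^i\to\sum_i\int_0^{\ell_i}\yb^i$, and the vanishing of the second term of $R$ forces $\sum_i\int_0^{\ell_i}\yb^i = 0$.

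It then remains to conclude that a function which is constant on each edge and continuous at every vertex is globally constant. This is where the graph structure enters: because $H^1(0,\ell_i)$ embeds continuously into $C[0,\ell_i]$, the endpoint traces are continuous functionals, so the limit $\yb_S$ indeed satisfies the kinematic coupling conditions; two edges sharing a vertex must therefore carry the same constant, and since the stent graph $(\cV,\cN)$ is connected, all edge constants coincide with a single $\cbb\in\ZR^3$. The mean condition $\sum_i\int_0^{\ell_i}\yb^i=\cbb\cdot\sum_i\ell_i=0$ then yields $\cbb=0$, so $\yb_S=0$. Consequently $\yb_n^i\to 0$ in $L^2$ and ${\yb_n^i}'\to 0$ in $L^2$, whence $\sum_i\|\yb_n^i\|^2_{H^1(0,\ell_i;\ZR^3)}\to 0$, contradicting the normalization equal to $1$. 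This contradiction proves the claim.

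The main obstacle is not any single estimate but the correct handling of the vertex continuity as one passes to the weak limit, together with the use of connectedness of $(\cV,\cN)$ to propagate a per-edge constant into a global one. The continuity of the trace operator on $H^1$ is exactly what guarantees that the weak limit stays in the constrained space and still satisfies the coupling conditions, while the embedding being \emph{compact} (not merely continuous) is what converts the weak $H^1$-bound into the strong $L^2$-convergence needed to annihilate the lower-order term.
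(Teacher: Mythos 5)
Your proof is correct and follows essentially the same compactness-and-contradiction argument as the paper: normalize a violating sequence, pass to a weak limit, use the vanishing derivative and vertex continuity to force an edgewise constant, and the mean condition to kill it. You are in fact slightly more explicit than the paper about the weak closedness of the constrained space and the role of connectedness of the graph, which the paper leaves implicit.
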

\begin{proof}
Let us suppose that the estimate does not hold. Then there is a sequence $(\yb_{Sn}) \subset H^1(\cN;\ZR^3)$ such that
\begin{equation}\label{ctd}
\sum_{i=1}^{n_\cN} \|\yb^i_n\|^2_{H^1(0,\ell_i;\ZR^3)} = 1
\end{equation}
and that
\begin{equation}\label{kvg}
\aligned
&\yb^i_n \rightharpoonup \yb^i \qquad \mbox{ weakly in } H^1(0,\ell_i; \ZR^3),\qquad i=1,\ldots, n_\cN,\\
&{\yb^i_n}' \to 0 \qquad \mbox{ strongly in } L^2(0,\ell_i; \ZR^3),\qquad i=1,\ldots, n_\cN,\\
& \sum_{i=1}^{n_\cN}\int_0^{\ell_i} \yb^i_n \to 0.
\endaligned
\end{equation}
From the first and second convergence ${\yb^i}'=0$ and thus $\yb^i = \const, i=1, \ldots, n_\cN$ due to the continuity requirement
from the definition of $H^1(\cN;\ZR^3)$. From the first and third convergence in (\ref{kvg}) we obtain that
$$
\sum_{i=1}^{n_\cN}\int_0^{\ell_i} \yb^i =0.
$$
This now implies $\yb_S=0$. The first and second convergence in (\ref{kvg}) also imply $\yb^i_n \to \yb^i$ strongly in $H^1(0,\ell_i; \ZR^3)$ for $i=1,\ldots,n_\cN$, i.e., $\yb^i_n \to 0$ strongly in $H^1(0,\ell_i;\ZR^3)$,
which is in contradiction with (\ref{ctd}).
\end{proof}

\begin{lemma}\label{lexistence_stent}
The form $k_S$ is $\Vstent$--elliptic.
\end{lemma}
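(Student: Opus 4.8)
The plan is to mimic the single-rod argument of Lemma~\ref{lexistence_one}a), replacing its repeated use of the scalar Poincar\'e inequality by two applications of the stent Poincar\'e estimate of Lemma~\ref{lPoincare}: once to the rotation field and once to the displacement field. Recall first what membership in $\Vstent$ gives. Testing $b_S(\ntb_S,\ub_S)=0$ with arbitrary $\nb^i\in L^2(0,\ell_i;\ZR^3)$ and zero $\alb,\betab$ forces the edgewise unshearability and inextensibility constraint $\partial_s\yb^i+\tb^i\times\thetab^i=0$ on every edge; testing with arbitrary $\alb,\betab\in\ZR^3$ forces the two global mean conditions $\sum_{i}\int_0^{\ell_i}\yb^i\,ds=0$ and $\sum_{i}\int_0^{\ell_i}\thetab^i\,ds=0$. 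These are exactly the hypotheses that replace the clamping used in the single-rod case.

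First I would bound $k_S$ from below. Since each $\Qbb^i$ is orthogonal and each $\Hbb^i$ is symmetric positive definite,
$$
k_S(\ub_S,\ub_S)=\sum_{i=1}^{n_\cN}\int_0^{\ell_i}\Hbb^i (\Qbb^i)^T\partial_s\thetab^i\cdot(\Qbb^i)^T\partial_s\thetab^i\,ds\ge c\sum_{i=1}^{n_\cN}\|\partial_s\thetab^i\|_{L^2(0,\ell_i;\ZR^3)}^2,
$$
with $c=\min_i\min\sigma(\Hbb^i)>0$, using that $(\Qbb^i)^T$ preserves the Euclidean norm. Next, since $\ub_S\in V_S=H^1(\cN;\ZR^6)$ both components are continuous at the vertices, so $\thetab_S=(\thetab^1,\dots,\thetab^{n_\cN})\in H^1(\cN;\ZR^3)$, and the mean condition $\sum_i\int_0^{\ell_i}\thetab^i=0$ holds. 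Applying Lemma~\ref{lPoincare} to $\thetab_S$ therefore yields
$$
\sum_{i=1}^{n_\cN}\|\thetab^i\|_{H^1(0,\ell_i;\ZR^3)}^2\le C_P\sum_{i=1}^{n_\cN}\|\partial_s\thetab^i\|_{L^2(0,\ell_i;\ZR^3)}^2.
$$

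To recover control of the displacement I would use the kinematic constraint: since $|\tb^i|=1$ pointwise, $\|\partial_s\yb^i\|_{L^2}=\|\tb^i\times\thetab^i\|_{L^2}\le\|\thetab^i\|_{L^2}$, so summing and inserting the previous bound gives $\sum_i\|\partial_s\yb^i\|_{L^2}^2\le\sum_i\|\thetab^i\|_{H^1}^2\le C_P\sum_i\|\partial_s\thetab^i\|_{L^2}^2$. As $\yb_S\in H^1(\cN;\ZR^3)$ and $\sum_i\int_0^{\ell_i}\yb^i=0$, a second application of Lemma~\ref{lPoincare}, now to $\yb_S$, bounds $\sum_i\|\yb^i\|_{H^1}^2$ by the same quantity. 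Adding the two contributions gives $\|\ub_S\|_{V_S}^2\le C\sum_i\|\partial_s\thetab^i\|_{L^2}^2\le (C/c)\,k_S(\ub_S,\ub_S)$, which is the asserted $\Vstent$-ellipticity.

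The computation is otherwise routine; the essential structural point—and the only place care is needed—is that $k_S$ controls only $\partial_s\thetab$, yet the kinematic constraint $\partial_s\yb^i=-\tb^i\times\thetab^i$ transfers that control to $\partial_s\yb$, so no independent coercivity in $\yb$ is required. The remaining bookkeeping simply verifies that the defining relations of $\Vstent$ supply precisely the vertex continuity and the two mean-zero conditions needed to invoke Lemma~\ref{lPoincare} for both $\yb_S$ and $\thetab_S$.
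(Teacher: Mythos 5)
Your proof is correct and takes essentially the same route as the paper: both arguments rest on Lemma~\ref{lPoincare} combined with the mean-zero and inextensibility/unshearability constraints encoded in $\Vstent$, plus the positive definiteness of $\Hbb^i$ and orthogonality of $\Qbb^i$. The only difference is organizational — you apply the Poincar\'e estimate separately to $\thetab_S$ and then to $\yb_S$, whereas the paper applies it once to the pair $(\yb_S,\thetab_S)$ and then absorbs $\|{\yb^i}'\|$ via the constraint — which is immaterial.
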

\begin{proof}
As in the proof of Lemma~\ref{lexistence_one} we estimate the function $\ub_S \in \Vstent$. In the first estimate we use Lemma~\ref{lPoincare} and (\ref{uniqueness})
$$
\aligned
&\hspace{-1.5ex}\|\ub_S\|^2_{H^1(\cN; \ZR^6)}= \sum_{i=1}^{n_\cN} (\|\yb^i\|_{H^1(0,\ell_i; \ZR^3)}^2 + \|\thetab^i\|_{H^1(0,\ell_i; \ZR^3)}^2 )\\
&\leq C_P(\sum_{i=1}^{n_\cN} (\|{\yb^i}'\|_{L^2(0,\ell_i; \ZR^3)}^2  + \|{\thetab^i}'\|_{L^2(0,\ell_i; \ZR^3)}^2) + |\sum_{i=1}^{n_\cN}\int_0^{\ell_i} \yb^i|^2 +
|\sum_{i=1}^{n_\cN}\int_0^{\ell_i} \thetab^i|^2)\\
&\leq 2C_P \!\!\sum_{i=1}^{n_\cN} (\|{\yb^i}'\!+\tb^i \!\times\! \thetab^i\|_{L^2(0,\ell_i; \ZR^3)}^2 + \|\tb^i \!\times\! \thetab^i\|_{L^2(0,\ell_i; \ZR^3)}^2+ \|{\thetab^i}'\|_{L^2(0,\ell_i; \ZR^3)}^2)\\
&= C \sum_{i=1}^{n_\cN} \|{\thetab^i}'\|_{L^2(0,\ell_i; \ZR^3)}^2\\
&\leq \frac{C}{\min_i \sigma(\Hbb^i)} k_S(\ub_S,\ub_S).
\endaligned
$$
\end{proof}

Together with continuity of the forms $k_S$, $b_S$ and the linear functional $l_S$ this Lemma implies the existence theorem for the weak formulation (\ref{weak_stent_0}).
We proceed with the analysis of the mixed formulation only for stents which belong to the class $\cS$ (see below) since then we know how to prove the $\inf\sup$ estimates (see Lemma~\ref{infsup} for more details).

\begin{definition}\label{classS}
The stent belongs to the class $\cS$ if one of the following is satisfied
\begin{itemize}
\item all edges are curved,
\item there are straight edges. Then
$$
\sum_{i\in J^+_j} \alpha_i \tb^i - \sum_{i\in J^-_j} \alpha_i \tb^i = 0, \quad j=1,\ldots, n_\cV, \quad \Leftrightarrow \quad \alpha_i=0, \quad i=1,\ldots,n_\cN;
$$
here $\alpha_i=0$ for edges which are not straight.
\end{itemize}
\end{definition}

\begin{lemma}\label{infsup}
Let the stent be in the class \cS. Then there is $\beta_{BB}>0$ such that
$$
\inf_{\ntb_S \in Q_S} \sup_{\utb_S \in V_S} \frac{b_S(\ntb_S,\utb_S)}{\|\ntb_S\|_{Q_S} \|\utb_S\|_{H^1(\cN;\ZR^6)}} \geq \beta_{BB}.
$$
\end{lemma}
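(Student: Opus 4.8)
The plan is to recast the stated inf--sup inequality as a surjectivity statement. As for a single rod, the form $b_S$ defines a bounded operator $B_S:V_S\to Q_S'$ by ${}_{Q_S'}\langle B_S\utb_S,\ntb_S\rangle_{Q_S}=b_S(\ntb_S,\utb_S)$, so that for fixed $\ntb_S$ the quantity $\sup_{\utb_S}b_S(\ntb_S,\utb_S)/\|\utb_S\|_{V_S}$ equals $\|b_S(\ntb_S,\cdot)\|_{V_S'}$. By the standard theory of mixed problems (\cite[Theorem~4.1]{GR}, \cite[Theorem~II.1.1]{BrezziFortin}) the inequality of Lemma~\ref{infsup} with some $\beta_{BB}>0$ is equivalent to surjectivity of $B_S$ onto $Q_S'$; indeed a surjective bounded operator has an adjoint that is bounded below, so it suffices to prove surjectivity and no explicit value of $\beta_{BB}$ need be produced. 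Identifying $Q_S'$ with $Q_S$, the task reduces to: given $(\lab^1,\dots,\lab^{n_\cN},\vb,\wb)\in Q_S$, find $\utb_S=((\ytb^i,\thetatb^i))_i\in V_S$ with
\[
\begin{gathered}
\partial_s\ytb^i+\tb^i\times\thetatb^i=\lab^i,\qquad i=1,\dots,n_\cN,\\
\sum_{i=1}^{n_\cN}\int_0^{\ell_i}\ytb^i\,ds=\vb,\qquad
\sum_{i=1}^{n_\cN}\int_0^{\ell_i}\thetatb^i\,ds=\wb.
\end{gathered}
\]

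I would construct $\utb_S$ edge by edge. On each edge the constraint $\partial_s\ytb^i=\lab^i-\tb^i\times\thetatb^i$ leaves $\thetatb^i\in H^1$ free and then fixes $\ytb^i$ up to its value at one vertex, the increment across the edge being $\int_0^{\ell_i}(\lab^i-\tb^i\times\thetatb^i)\,ds$; this is exactly the per--edge solvability analysed in Lemma~\ref{lexistence_one}b). Membership $\utb_S\in V_S$ means that the vertex values of $\ytb^i$ and $\thetatb^i$ must agree for all edges meeting at a common vertex, so, after also imposing the two mean conditions, the construction collapses to a finite--dimensional compatibility system for the vertex data, assembled according to the topology of the graph as in \cite{Zugec0,Zugec1}. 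The only genuine obstruction is tangential on straight edges: since $\tb^i\cdot(\tb^i\times\thetatb^i)=0$, the tangential increment $\tb^i\cdot\big(\ytb^i(\ell_i)-\ytb^i(0)\big)=\int_0^{\ell_i}\tb^i\cdot\lab^i\,ds$ is forced, whereas on a curved edge the variation of $\tb^i$ lets $\int_0^{\ell_i}\tb^i\times\thetatb^i\,ds$ attain any prescribed vector, so no obstruction appears there (matching the dichotomy in Lemma~\ref{lexistence_one}b)).

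Collecting the forced tangential increments over the straight edges yields a linear system for the vertex values in $\ZR^3$, and its solvability for every right--hand side is equivalent, by transposition, to the injectivity statement that $\sum_{i\in J^+_j}\alpha_i\tb^i-\sum_{i\in J^-_j}\alpha_i\tb^i=0$ for all $j$ forces $\alpha_i=0$, where the scalars $\alpha_i$ are attached to the straight edges; this is precisely the defining property of the class $\cS$. Hence for stents in $\cS$ a field $\utb_S^0\in V_S$ satisfying all the edge equations exists. Finally, I fix the two mean conditions by adding a rigid field of the form (\ref{ker}): such fields satisfy $\partial_s\ytb+\tb\times\thetatb=0$, so they do not disturb the edge equations, and the map sending $(\const_y,\const_\theta)$ to the means of the corresponding field (\ref{ker}) is invertible --- it is block triangular with diagonal blocks $(\sum_i\ell_i)\Ibb$, exactly the computation already carried out in Remark~\ref{rab}. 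A unique rigid correction then sends $\utb_S^0$ to the prescribed $(\vb,\wb)$, which proves surjectivity of $B_S$ and hence the inf--sup inequality.

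The abstract reduction to surjectivity and the rigid--motion correction (whose invertibility repeats Remark~\ref{rab} and underlies Lemma~\ref{lexistence_stent}) are routine. The crux, and the step I expect to demand the most care, is the graph--level solvability: showing that the assembled finite system for the vertex data can be solved for arbitrary data, which is exactly where the stent topology and, for straight edges, the class $\cS$ hypothesis are indispensable. A secondary point is that the per--edge solution operators together with the assembled finite system yield $\utb_S$ depending boundedly on $(\lab^i,\vb,\wb)$; since everything beyond the $L^2$ targets $\lab^i$ is finite dimensional, this reduces to invertibility of a fixed matrix and carries no essential difficulty.
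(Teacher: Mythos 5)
Your proposal is correct, but it reaches the inf--sup inequality by a genuinely different route than the paper. The paper constructs a preimage of a given $\ntb_S=(\lab^1,\dots,\lab^{n_\cN},\alb,\betab)$ by imposing the auxiliary equilibrium system (\ref{lambda1}) with the contact conditions (\ref{cc}), integrating explicitly along each edge, and assembling the unknown constants $(\Mb,\Nb,\Thetab,\Yb)$ into the finite saddle--point system (\ref{thesystem}); solvability is then deduced from $\Ker(\Bbbb)\cap\Ker(\Abb)=\{0\}$ for stents of class $\cS$ (imported from \cite[Lema 3.4]{Zugec0}) and the Moore--Penrose inverse, which yields the quantitative bound (\ref{estest}) with $C=\|\Hbb^+\|$ and hence an explicit $\beta_{BB}=1/C$. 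You instead leave $\thetatb^i$ entirely free on each edge, note that the only obstruction to realizing the prescribed increments of $\ytb^i$ across an edge is the tangential component on straight edges (because $\int_0^{\ell_i}\tb^i\times\thetatb^i\,ds$ sweeps out all of $\ZR^3$ on a curved edge but only $(\tb^i)^\perp$ on a straight one), and observe that solvability of the resulting finite system for the vertex values is, by transposition, exactly the injectivity condition in Definition~\ref{classS}. This makes the role of the class $\cS$ far more transparent, avoids the auxiliary mechanical system and the external reference, and the final rigid--motion correction for the two mean conditions agrees with the paper's step involving $\Ub$ and $\Omegab$. What each approach buys: the paper's explicit construction produces a computable inf--sup constant, while your reduction of the inequality to surjectivity of $B_S$ via the closed range theorem gives only the existence of some $\beta_{BB}>0$ --- which is all the lemma claims --- at the cost of an explicit constant unless you track the bounds through the per--edge right inverses and the fixed finite matrix, as you correctly indicate can be done. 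The one step you should write out in full is that for a curved edge the map $\thetatb^i\mapsto\int_0^{\ell_i}\tb^i\times\thetatb^i\,ds$ remains onto $\ZR^3$ after the endpoint values of $\thetatb^i$ are fixed; using perturbations supported away from the endpoints, this reduces to the fact that $\vb\times\tb^i(s)=0$ a.e.\ forces $\tb^i$ to be constant, the same computation as in Lemma~\ref{lexistence_one}b).
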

\begin{proof}
For a given $\ntb_S=(\lab_1,\ldots,\lab_{n_\cN},\alb,\betab) \in Q_S$ we will find $\ub_S = ((\yb^1,\thetab^1), \ldots,(\yb^{n_\cN},\thetab^{n_\cN})) \in V_S$ such that
\begin{equation}\label{lambda}
\aligned
&\partial_s \yb^i + \tb^i \times \thetab^i =\lab^i, \qquad i=1,\ldots,n_\cN,\\
&\sum_{i=1}^{n_\cN}\int_0^{\ell_i} \yb^i ds = \alb,\\
&\sum_{i=1}^{n_\cN}\int_0^{\ell_i} \thetab^i ds = \betab,
\endaligned
\end{equation}
and such that there is a constant $C$ independent of $\ub_S$ and $\ntb_S$ for which
\begin{equation}\label{estest}
\|\ub_S\|_{H^1(\cN;\ZR^3)} \leq C \|\ntb_S\|_{Q_S}.
\end{equation}
The statement of the lemma then follows since
$$
\sup_{\utb_S \in V_S}\frac{b_S(\ntb_S,\utb_S)}{\|\ntb_S\|_{Q_S} \|\utb_S|_{H^1(\cN;\ZR^6)}} \geq \frac{\|\ntb_S\|_{Q_S}^2}{\|\ntb_S\|_{Q_S} \|\ub_S\|_{H^1(\cN;\ZR^6)}} =\frac{\|\ntb_S\|_{Q_S}}{\|\ub_S\|_{H^1(\cN;\ZR^6)}}  \geq  \frac{1}{C} =: \beta_{BB}.
$$

We impose more restrictions that still lead us to the solution of (\ref{lambda}):
\begin{equation}\label{lambda1}
\aligned
&\partial_s \nb^i  =0, \qquad i=1,\ldots,n_\cN,\\
&\partial_s \mb^i + \tb^i \times \nb^i =0, \qquad i=1,\ldots,n_\cN,\\
&\partial_s \thetab^i - \mb^i =0, \qquad i=1,\ldots,n_\cN,\\
&\partial_s \yb^i + \tb^i \times \thetab^i =\lab^i, \qquad i=1,\ldots,n_\cN,\\
&\sum_{i=1}^{n_\cN}\int_0^{\ell_i} \yb^i ds = \alb,\\
&\sum_{i=1}^{n_\cN}\int_0^{\ell_i} \thetab^i ds = \betab,
\endaligned
\end{equation}
and the functions $(\yb^i, \thetab^i, \mb^i, \nb^i)$, $i=1,\ldots,n_\cN$ have to satisfy the kinematic and dynamic contact conditions:
\begin{equation}\label{cc}
\aligned
&\sum_{i\in J^+_j} \nb^i(\ell_i) - \sum_{i\in J^-_j} \nb^i(0) = 0, \qquad j=1,\ldots, n_\cV,\\
&\sum_{i\in J^+_j} \mb^i(\ell_i) - \sum_{i\in J^-_j} \mb^i(0) = 0, \qquad j=1,\ldots, n_\cV,\\
&\Thetab^j = \thetab^i(0) = \thetab^k(\ell^k), \qquad i \in J^-_j, k\in J^+_j, \qquad j=1,\ldots, n_\cV,\\
&\Yb^j = \yb^i(0) = \yb^k(\ell^k), \qquad i \in J^-_j, k\in J^+_j, \qquad j=1,\ldots, n_\cV.
\endaligned
\end{equation}
This system corresponds to the equilibrium stent problem with zero forcing, for specific material and with  the proposed extension given by $\lab^i$'s. This problem is very similar to  (\ref{lj1ri})--(\ref{ccs}).

From the first equation in (\ref{lambda1}) we conclude that all $\nb^i$ are constant. Integrating the second equation we obtain that
$$
\mb^i(s) = \mb^i(\ell_i) + \int_s^{\ell_i} \tb^i(r) dr \times \nb^i = \mb^i(\ell_i) + \Phitb^i(s)\times \nb^i,
$$
where $\Phitb^i(s) = \Phib^i(\ell_i) - \Phib^i(s)$.
Let us now insert the values for $\nb^i$ and $\mb^i$ to the dynamical contact conditions (the first two in (\ref{cc})). We obtain
\begin{equation}\label{cc1}
\aligned
&\sum_{i\in J^+_j} \nb^i - \sum_{i\in J^-_j} \nb^i = 0, \qquad j=1,\ldots, n_\cV,\\
&\sum_{i\in J^+_j} \mb^i(\ell_i)  - \sum_{i\in J^-_j} (\mb^i(\ell_i) + \Phitb^i(0) \times \nb^i)= 0, \qquad j=1,\ldots, n_\cV.
\endaligned
\end{equation}
Let $\Abb_\cI \in M_{3n_\cV, 3n_\cN}(\ZR)$ denote the incidence matrix of the oriented graph $(\cV,\cN)$ with three connected components (organized in the following way: a $3\times3$ submatrix at rows $3i-2,3i-1,3i$ and columns $3j-2,3j-1,3j$ is $\Ibb$ if the edge $j$ enters the vertex $i$, $-\Ibb$ if it leaves the vertex $i$ or $0$ otherwise). Let us also denote projectors
$$
\Pbbb^i_\cN \in M_{3,3n_\cN}, \qquad \Pbbb^j_\cV\in M_{3,3n_\cV}
$$
on the coordinates $3i-2,3i-1,3i$ and $3j-2,3j-1,3j$, respectively. Then we define the matrix
\begin{equation}\label{ap}
\Abb_\Phib = \sum_{j=1}^{n_\cV} \sum_{i\in J^-_j} (\Pbbb^j_\cV)^T \Abb_{\Phitb^i(0)} \Pbbb^i_\cN
\end{equation}
(the matrix $\Abb_\vb$ is the skew-symmetric matrix associated with the axial vector $\vb$, i.e., $\Abb_\vb \xb = \vb \times \xb, \xb\in \ZR^3$).
Using the notation
$$
\Nb = ((\nb^1)^T, \ldots, (\nb^{n_\cN})^T)^T, \qquad \Mb = ((\mb^1(\ell^1))^T, \ldots, (\mb^{n_\cN}(\ell^{n_\cN}))^T)^T
$$
since
$$
\Pbbb^j_\cV \Abb_\Phib \Nb = \sum_{i\in J^-_j} \Abb_{\Phitb^i(0)} \Pbbb^i_\cN \Nb = \sum_{i\in J^-_j} \Phitb^i(0) \times \nb^i
$$
the equations (\ref{cc1}) can be written by
\begin{equation}\label{eqNM}
-\Abb_\cI \Nb =0, \qquad -\Abb_\cI \Mb + \Abb_\Phib \Nb =0.
\end{equation}

The integration of the  third equation in (\ref{lambda1}) implies
\begin{equation}\label{thetas}
\thetab^i(s) - \thetab^i(0)= \int_0^{s} \mb^i(r) dr = s \mb^i(\ell_i) + \int_0^{s}\Phitb^i(r)\times \nb^i dr = s \Pbbb^i_\cN \Mb  + \int_0^{s}\Abb_{\Phitb^i(r)}dr \Pbbb^i_\cN \Nb.
\end{equation}
Integration of the fourth equation in (\ref{lambda1}) implies
$$
\aligned
\yb^i(\ell_i) -  \yb^i(0) &= - \int_0^{\ell_i} \tb^i(s) \times \thetab^i(s) ds  + \int_0^{\ell_i}\lab^i(s) ds\\
&= - \int_0^{\ell_i} (\Phib^i)'(s) \times \left(\thetab^i(0)+\int_0^s \mb^i(r) dr \right) ds  + \int_0^{\ell_i}\lab^i(s) ds\\
&= - \Phitb^i(0) \times \thetab^i(0) - \left(\Phib^i(s) \times \int_0^s \mb^i(r) dr \right) \bigg|_0^{\ell_i}+ \int_0^{\ell_i} \Phib^i(s) \times  \mb^i(s)  ds  + \int_0^{\ell_i}\lab^i(s) ds\\
&= - \Phitb^i(0) \times \thetab^i(0) - \Phib^i(\ell_i) \times \int_0^{\ell_i} \mb^i(s) ds + \int_0^{\ell_i} \Phib^i(s) \times  \mb^i(s)  ds  + \int_0^{\ell_i}\lab^i(s) ds\\
&= - \Phitb^i(0) \times \thetab^i(0) -  \int_0^{\ell_i} \Phitb^i(s) \times  \mb^i(s)  ds  + \int_0^{\ell_i}\lab^i(s) ds.
%
 \endaligned
$$
Therefore
\begin{equation}\label{*}
\yb^i(\ell_i) -  \yb^i(0)= - \Abb_{\Phitb^i(0)} \thetab^i(0) -  \int_0^{\ell_i} \Abb_{\Phitb^i(s)} (\Pbbb^i_\cN \Mb + \Abb_{\Phitb^i(s)} \Pbbb^i_\cN \Nb)  ds  + \int_0^{\ell_i}\lab^i(s) ds.
\end{equation}
Next we introduce three vectors
$$\aligned
\Thetab &= ((\Thetab^1)^T, \ldots, (\Thetab^{n_\cV})^T)^T, \\
 \Yb &= ((\Yb^1)^T, \ldots, (\Yb^{n_\cV})^T)^T, \\
 \Lab &= ((\int_0^{\ell_1}\lab^1(s)ds)^T, \ldots, (\int_0^{\ell_{n_\cN}}\lab^{n_\cN}(s)ds)^T)^T.
 \endaligned
$$
The equation (\ref{thetas}) for $s=\ell_i$ is now given by
$$
\Pbbb^i_\cN \Abb_\cI^T \Thetab = \ell_i \Pbbb^i_\cN \Mb  + \int_0^{\ell_i}\Abb_{\Phitb^i(s)}ds \Pbbb^i_\cN \Nb.
$$
Thus we obtain
$$
\Abb_\cI^T \Thetab = \sum_{i=1}^{n_\cN} (\Pbbb^i_\cN)^T\Pbbb^i_\cN \Abb_\cI^T \Thetab =  (\sum_{i=1}^{n_\cN} (\Pbbb^i_\cN)^T \ell_i \Pbbb^i_\cN) \Mb  +  (\sum_{i=1}^{n_\cN} (\Pbbb^i_\cN)^T \int_0^{\ell_i}\Abb_{\Phitb^i(s)}ds \Pbbb^i_\cN ) \Nb
$$
which gives the third equation
\begin{equation}\label{treca}
-\Abb_\cI^T \Thetab + \Nbb_K \Mb  +  \Nbb_{K\Abb_\Phitb} \Nb =0,
\end{equation}
where the matrices
$$
\Nbb_K = \sum_{i=1}^{n_\cN} (\Pbbb^i_\cN)^T \ell_i \Pbbb^i_\cN, \qquad \Nbb_{K\Abb_\Phitb} = \sum_{i=1}^{n_\cN} (\Pbbb^i_\cN)^T \int_0^{\ell_i}\Abb_{\Phitb^i(s)}ds \Pbbb^i_\cN
$$
are block diagonal matrices with diagonal elements given by $\ell_i \Ibb$ and $\int_0^{\ell_i}\Abb_{\Phitb^i(s)}ds$, respectively.

The last equation we obtain from the integration of the fourth equation, i.e. (\ref{*}). Let us use the notation: $\sigma(i) = $ numeration of the leaving vertex of the edge $i$. Then we obtain
$$
\aligned
\Pbbb^i_\cN \Abb_\cI^T \Yb &= - \Abb_{\Phitb^i(0)} \Pbbb^{\sigma(i)}_\cV \Thetab -  \int_0^{\ell_i} \Abb_{\Phitb^i(s)} ds\ \Pbbb^i_\cN \Mb - \int_0^{\ell_i} \Abb_{\Phitb^i(s)} \Abb_{\Phitb^i(s)} ds\  \Pbbb^i_\cN \Nb  + \Pbbb^i_\cN \Lab.
 \endaligned
$$
Therefore, similarly as before we obtain
\begin{equation}\label{cetvrta}
-\Abb_\cI^T \Yb  -  \Abb_\Qb \Thetab -  \Nbb_{K\Abb_\Phitb} \Mb - \Nbb_{\Abb_\Phitb K \Abb_\Phitb} \Nb  +  \Lab = 0,
\end{equation}
where
$$
\Nbb_{\Abb_\Phitb K \Abb_\Phitb} = \sum_{i=1}^{n_\cN} (\Pbbb^i_\cN)^T \int_0^{\ell_i} \Abb_{\Phitb^i(s)} \Abb_{\Phitb^i(s)} ds\  \Pbbb^i_\cN
$$
is the diagonal matrix with diagonal elements given by $\int_0^{\ell_i} \Abb_{\Phitb^i(s)} \Abb_{\Phitb^i(s)} ds$ and
$$
\Abb_\Qb = \sum_{i=1}^{n_\cN} (\Pbbb^i_\cN)^T \Abb_{\Phitb^i(0)} \Pbbb^{\sigma(i)}_\cV.
$$
Note that the sum in the definition of $\Abb_\Phib$ is over all exiting edges from all vertices. Therefore this sum can be written over all edges but for prescribed exiting vertex. Therefore $\Abb_\Qb = - \Abb_\Phib^T$!
Therefore the system given by (\ref{eqNM}), (\ref{treca}), (\ref{cetvrta}) for $(\Yb,\Thetab,\Mb,\Nb)$ can be written by
\begin{equation}\label{thesystem}
\left( \begin{array}{cc}
\Bbbb & \Abb^T\\
\Abb & 0
\end{array}\right) \left( \begin{array}{c}
\xb\\ \yb
\end{array}\right) = \left( \begin{array}{c}
\fb\\ 0
\end{array}\right)  ,
\end{equation}
where
$$
\aligned
&\Bbbb = \left( \begin{array}{cc}
\Nbb_K  & \Nbb_{K\Abb_\Phitb} \\
-  \Nbb_{K\Abb_\Phitb} & - \Nbb_{\Abb_\Phitb K \Abb_\Phitb}
\end{array}\right),
\qquad
\Abb = \left( \begin{array}{cc}
-\Abb_\cI & \Abb_\Phib\\
0 & -\Abb_\cI
\end{array}\right),\\
& \xb = \left( \begin{array}{c}
\Mb\\ \Nb
\end{array}\right),
\qquad
\yb = \left( \begin{array}{c}
\Thetab\\ \Yb
\end{array}\right),
\qquad
\fb = \left( \begin{array}{c}
0\\ \Lab
\end{array}\right).
\endaligned
$$

Following \cite{VeselicVeselic} we compute the null space of the matrix
$$
\Hbb=\left( \begin{array}{cc}
\Bbbb & \Abb^T\\
\Abb & 0
\end{array}\right)
$$
as
$$
\Ker(\Hbb)=\left\{\left( \begin{array}{c}
\xb\\ \yb
\end{array}\right)~:~ \xb\in\Ker(\Bbbb)\cap\Ker(\Abb),~\yb\in\Ker(\Abb^T)\right\}
$$
and since according to \cite[Lema 3.4]{Zugec0} we have for stents of class $\mathcal{S}$ that $\Ker(\Bbbb)\cap\Ker(\Abb)=\{0\}$ we see that the vector $\phib:=\left( \begin{array}{cc}
\fb^T& 0
\end{array}\right)^T$ is orthogonal to $\Ker(\Hbb)$. For Hermitian matrices $\Hbb$ this is equivalent to the statement that $\phib\in\Ran(\Hbb)$, and so the system \eqref{thesystem} has at least one solution $\chib$ such that $\Hbb\chib=\phib$.

Let now $\Hbb^+$ be the Moore-Penrose generalized inverse of $\Hbb$. It is the unique Hermitian matrix $\Hbb^+$ such that matrices $\Hbb\Hbb^+$ and $\Hbb^+\Hbb$ are both orthogonal projections onto $\Ran(\Hbb)$ and $\Ran(\Hbb^+)$ respectively.
Recall that a matrix is an orthogonal projection if it is Hermitian and idempotent.

Thus the vector
$$
\chib_0 : =
\Hbb^+\phib
$$
satisfies
$$
\Hbb\chib_0
=\Hbb\Hbb^+\phib=\phib,
$$
since  $\phib\in\Ran(\Hbb)$ implies $\Hbb\Hbb^+\phib=\phib$. Therefore $\chib_0$
is a particular solution of \eqref{thesystem} whose norm is controlled by $\|\phib\|=\|\fb\|$.
Thus for $\left(\,\xb^T\ \yb^T\,\right)^T := \chib_0$ there is a constant $C=\|\Hbb^+\|$, depending only on the geometry of the stent such that
$$
\|\xb\| \leq C\|\fb\| \leq C\|\Lab\|, \qquad \|\yb\|\leq C \|\fb\| \leq C \|\Lab\|.
$$
Using the definition of $\xb, \yb$ and $\Lab$ we obtain
\begin{equation}\label{est0}
\|\Nb\|^2 + \|\Mb\|^2 + \|\Thetab\|^2 + \|\Yb\|^2\leq C  \sum_{i=1}^{n_\cN} \|\lab^i\|_{L^2(0,\ell_i;\ZR^3)}^2.
\end{equation}
These constants ($\Yb,\Thetab, \Mb, \Nb$) uniquely determine the function $\ub_S$ by (\ref{thetas}) and (\ref{*}) and for this solution one has
\begin{equation}\label{est0a}
\|\ub_S\|^2_{H^1(\cN;\ZR^6)}\leq C  \sum_{i=1}^{n_\cN} \|\lab^i\|_{L^2(0,\ell_i;\ZR^3)}^2.
\end{equation}

Next we need to satisfy the last two equations in (\ref{lambda}), so we define
$$
\aligned
&\Omegab = \frac{1}{\sum_i^{n_\cN}\ell_i} \left(\betab - \sum_{i=1}^{n_\cN} \int_0^{\ell_i}\thb^i (s)ds\right),\\
&\Ub =\frac{1}{\sum_i^{n_\cN}\ell_i} \left(\alb - \sum_{i=1}^{n_\cN} \int_0^{\ell_i}\yb^i (s)ds + \sum_{i=1}^{n_\cN} \int_0^{\ell_i}\Phib^i (s)ds \times \Omegab\right).
\endaligned
$$
Now we denote $\yhb^i = \yb^i + \Ub - \Phib^i\times \Omegab$, $\thhhb^i=\thb^i + \Omegab$. Then $(\yhb^i , \thhhb^i, \mb^i, \nb^i)$ satisfies the same equations as $(\yb^i, \thb^i, \mb^i, \nb^i)$, i.e., (\ref{lambda1}) and (\ref{cc}), but with different values in contacts, namely
$$
\yhb^i(0) = \Ub - \Phib^i(0) \times \Omegab, \quad \thhhb^i(0) = \Omegab, \quad \yhb^i(\ell_i) = \Ub - \Phib^i(\ell_i) \times \Omegab, \quad \thhhb^i(\ell_i) = \Omegab.
$$
However,
$$\ub^{\alb,\betab}_S = ((\yb^1+ \Ub - \Phib^1\times \Omegab, \thb^1+ \Omegab), \ldots, (\yb^{n_\cN}+ \Ub - \Phib^{n_\cN}\times \Omegab, \thb^{n_\cN}+ \Omegab))\in V_S
$$
and $\Ub$ and $\Omegab$ are defined such that
$$
\aligned
&\sum_{i=1}^{n_\cN} \int_0^{\ell_i} \yhb^i ds= \sum_{i=1}^{n_\cN} \int_0^{\ell_i} \yb^i ds +  \sum_{i=1}^{n_\cN}\ell_i \Ub -  \sum_{i=1}^{n_\cN}\int_0^{\ell_i} \Phib^i\times \Omegab ds = \alb,\\
&\sum_{i=1}^{n_\cN} \int_0^{\ell_i} \thhhb^i ds = \sum_{i=1}^{n_\cN} \int_0^{\ell_i} \thb^i ds +  \sum_{i=1}^{n_\cN} \ell_i \Omegab = \betab.
\endaligned
$$
Therefore from (\ref{est0a}) we obtain
\begin{equation}\label{est1}
\|\ub^{\alb,\betab}_S\|^2_{H^1(\cN;\ZR^6)}\leq C \left( \sum_{i=1}^{n_\cN} \|\lab^i\|_{L^2(0,\ell_i;\ZR^3)}^2 + \|\Ub\|^2 + \|\Omegab\|^2\right) \leq C \|\ntb_S\|^2_{Q_S}.
\end{equation}
Thus the function $\ub_S^{\alb,\betab}$ satisfies (\ref{lambda}) and (\ref{estest}) as announced and thus the lemma is proved.
\end{proof}

\begin{proposition}\label{pexistence_stent}
\begin{itemize}
\item[a)] The problem (\ref{weak_stent_0}) has a unique solution.
\item[b)] For every $l_S \in L^2(0,\ell; \ZR^3)'$ the problem (\ref{mixed_stent}) has a unique solution. This solution satisfies also the problem (\ref{weak_stent_0}).
\item[c)] Let $\ub_S\in \Vstent$ be the solution of (\ref{weak_stent_0}) then there is $\nb_S\in Q_S$ such that $(\ub_S,\nb_S)$ satisfies (\ref{mixed_stent}).
\item[d)] There exsits a constant $C$ such that
$$
\|\ub_S\|_{V_S}+\|\nb_S\|_{Q_S}\leq C_{LBB}\|l_S\|_{L^2(\cN; \ZR^3)}.
$$
\end{itemize}
\end{proposition}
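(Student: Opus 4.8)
The plan is to reduce all four claims to the abstract theory of saddle-point (mixed) problems, in exact parallel with the single-rod Proposition~\ref{pexistence_one}: the single-rod ingredients there are now replaced by their stent counterparts, all of which have already been established in the preceding lemmas. Concretely, let $B_S : V_S \to Q_S'$ be the bounded operator induced by $b_S$, so that $\Vstent = \Ker{B_S}$. The three structural hypotheses required by \cite[Theorem~4.1 and Corollary~4.1]{GR} (equivalently \cite[Theorem~II.1.1]{BrezziFortin}) are then all in place: continuity of $k_S$, $b_S$ and $l_S$ (immediate from Cauchy--Schwarz together with boundedness of $\Qbb^i\Hbb^i(\Qbb^i)^T$, of $\tb^i$ and of $\Phib^i$ on the bounded edges); $\Vstent$-ellipticity of $k_S$, supplied by Lemma~\ref{lexistence_stent}; and the Babuska--Brezzi inf--sup condition for $b_S$ with constant $\beta_{BB}>0$, supplied by Lemma~\ref{infsup} under the standing assumption that the stent lies in the class $\cS$.

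For part a) I would simply apply the Lax--Milgram lemma on the closed subspace $\Vstent\subset V_S$: continuity and $\Vstent$-ellipticity of $k_S$ and continuity of $l_S$ give a unique $\ub_S\in\Vstent$ solving (\ref{weak_stent_0}). For part b) the inf--sup condition makes $B_S$ surjective with closed range and forces the adjoint map $\ntb_S\mapsto b_S(\ntb_S,\cdot)$ to be injective; combined with the ellipticity of $k_S$ on $\Ker{B_S}$ this is precisely the hypothesis set of the Brezzi theorem, yielding a unique pair $(\ub_S,\nb_S)\in V_S\times Q_S$ solving (\ref{mixed_stent}). Choosing test functions $\utb_S\in\Vstent$ in the first equation of (\ref{mixed_stent}) annihilates the coupling term, since then $b_S(\nb_S,\utb_S)={}_{Q_S'}\langle B_S\utb_S,\nb_S\rangle_{Q_S}=0$, so the $\ub_S$-component solves (\ref{weak_stent_0}) and, by uniqueness in a), coincides with the weak solution.

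Part c) is the converse reading of the same theorem: given the weak solution $\ub_S\in\Vstent$, the inf--sup condition produces the (unique) multiplier $\nb_S\in Q_S$ restoring the first equation of (\ref{mixed_stent}), while the second equation holds automatically because $\ub_S\in\Vstent=\Ker{B_S}$. For part d) the stability bound is the quantitative output of the same framework: one estimates $\ub_S$ from ellipticity and the inf--sup condition, then estimates $\nb_S$ by applying the inf--sup condition to the first equation of (\ref{mixed_stent}), obtaining
$$
\|\ub_S\|_{V_S}+\|\nb_S\|_{Q_S}\leq C_{LBB}\|l_S\|_{L^2(\cN;\ZR^3)},
$$
with $C_{LBB}$ expressible through the continuity constant of $k_S$, the ellipticity constant from Lemma~\ref{lexistence_stent}, and $\beta_{BB}$ from Lemma~\ref{infsup}.

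Since the two genuinely analytic facts---coercivity on the constraint space and the inf--sup condition---are already proved in the preceding lemmas, there is no real obstacle remaining: the proposition is essentially an assembly exercise within the Babuska--Brezzi framework. The only point warranting a moment's care is verifying that test functions drawn from $\Vstent$ kill the $b_S$-coupling, so that the mixed solution restricts to the weak one and the identification of the two $\ub_S$-components via part a) is legitimate.
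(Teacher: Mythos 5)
Your proposal is correct and follows essentially the same route as the paper: part a) via Lax--Milgram using the $\Vstent$-ellipticity from Lemma~\ref{lexistence_stent}, and parts b)--d) as direct applications of the abstract saddle-point theory of \cite[Theorem~4.1 and Corollary~4.1]{GR} / \cite[Theorem II.1.1]{BrezziFortin}, with the inf--sup condition supplied by Lemma~\ref{infsup} under the class-$\cS$ assumption. The paper's own proof is just a terser version of the same assembly argument.
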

\begin{proof}
As in Proposition~\ref{pexistence_one} the statement a) is a direct consequence of Lemma~\ref{lexistence_stent}a), continuity of forms $k_S$ and
$b_S$ and linear functional $l_S$ and the Lax--Milgram lemma.

Again, as in Proposition~\ref{pexistence_one} the statements b) and c) are classical results about the linear variational problems with constraints,
see \cite[Theorem~4.1 and Corollary 4.1]{GR} or \cite[Theorem II.1.1]{BrezziFortin}.
\end{proof}

%
%
%
%

\section*{Acknowledgement}
The research is supported by the Croatian Science Foundation grant nr. HRZZ 9345.

\end{document}